\newtheorem{proposition}{Proposition}
\newtheorem{corollary}{Corollary}
\newtheorem{theorem}{Theorem}
\newtheorem{lemma}{Lemma}
\theoremstyle{definition}
\newtheorem{definition}{Definition}
\newtheorem{example}{Example}
\newtheorem{remark}{Remark}
\DeclarePairedDelimiter\inner{\langle}{\rangle}
\DeclarePairedDelimiter\abs{\lvert}{\rvert}
\DeclarePairedDelimiter\norm{\lVert}{\rVert}
\DeclareMathOperator{\vspan}{Span}
\DeclareMathOperator{\Tr}{Tr}
\DeclareMathOperator{\Sp}{Sp}
\newcommand{\F}{\mathbb{F}}
\newcommand{\C}{\mathbb{C}}
\newcommand{\R}{\mathbb{R}}
\newcommand{\Z}{\mathbb{Z}}
\newcommand{\N}{\mathbb{N}}
\begin{document}

\title[Rigidity and a common framework for MUBs and $k$-nets]{Rigidity and a common framework for \\ mutually unbiased bases and $k$-nets}

\author{Sloan Nietert}
\address{
Sloan Nietert\\
Dept.\ of Computer Science,
Cornell University\\
Ithaca, NY 14853, USA}
\email{sbn45@cornell.edu}

\author{Zsombor Szil\'agyi}
\address{
Zsombor Szil\'agyi\\
MTA-BME Lend\"ulet Quantum Information Theory Research Group\\
Dept.\ of Theoretical Physics, 
Budapest University of Technology \& Economics (BME)\\
Budafoki \'ut 8., H-1111 Budapest, Hungary
}
\email{zsombor.szilagyi@gmail.com}

\author{Mih\'aly Weiner}
\address{
Mih\'aly Weiner\\
MTA-BME Lend\"ulet Quantum Information Theory Research Group\\
Dept.\ of Analysis, 
Budapest University of Technology \& Economics (BME)\\
Egry J\'ozsef u. 1., H-1111 Budapest, Hungary
}
\email{mweiner@math.bme.hu}

\thanks{Sz. Zsombor and M. Weiner are supported in part by grant NRDI K 124152 and  KH 129601. M. Weiner is also supported by the ERC advanced grant 669240 QUEST ``Quantum Algebraic Structures and Models.'' S. Nietert was supported by the Hungarian-American Fulbright Commission.}

\begin{abstract}
Many deep, mysterious connections have been observed between
collections of mutually unbiased bases (MUBs) and combinatorial designs called $k$-nets (and in particular, between \textit{complete} collections of MUBs and finite affine --- or equivalently: finite projective --- planes). Here we introduce the notion of a $k$-net over an algebra $\mathfrak{A}$ and thus provide a common framework for both objects. In the commutative case, we recover (classical) $k$-nets, while choosing  $\mathfrak{A} := M_d(\C)$ leads to collections of MUBs. 

A common framework allows one to find shared properties
and proofs that ``inherently work'' for both objects. As a first example, we derive a certain rigidity property which was previously shown to hold for $k$-nets that can be completed to affine planes using a completely different, combinatorial argument. For $k$-nets that cannot be completed and for MUBs, this result is new, and, in particular, it implies that the only vectors unbiased to all but $k \leq \sqrt{d}$ bases of a complete collection of MUBs in $\C^d$ are the elements of the remaining $k$ bases (up to phase factors). In general, this is false when $k$ is just the next integer after $\sqrt{d}$; we present an example of this in every prime-square dimension, demonstrating that the derived bound is tight.

As an application of the rigidity result, we prove that if a large enough collection of MUBs constructed from a certain type of group representation (e.g.\ a construction relying on discrete Weyl operators or generalized Pauli matrices) can be extended to a complete system, then in fact \textit{every} basis of the completion must come from the same representation. In turn, we use this to show that certain large systems of MUBs cannot be completed.
\end{abstract}

\maketitle

\setlength{\parindent}{0em}
\setlength{\parskip}{\medskipamount}

\section{Introduction}\label{sec:intro}

Mutually unbiased bases (MUBs) arise naturally in several quantum information protocols and are investigated extensively from both purely mathematical and quantum informational perspectives
\cite{schwinger60,ivanovic81,wootters89, weiner13, self}.
Recall that two orthonormal bases $\mathcal{E}$ and $\mathcal{F}$ of $\C^d$ are called \textit{mutually unbiased} if $\abs{\inner{\mathbf{e},\mathbf{f}}}^2 = 1/d$ for each $\mathbf{e} \in \mathcal{E}, \mathbf{f} \in \mathcal{F}$ and that a collection of pairwise mutually unbiased
bases $\mathcal{E}_1, \dots, \mathcal{E}_r$
in $\C^d$ is said to be \textit{complete}, if $r=d+1$ (as it is easy to prove that $r \leq d+1$ for any collection of MUBs in $\C^d)$.
Researchers widely believe that a complete set of MUBs exist in $\C^d$ if and only if $d$ is a prime power, but non-existence has yet to be proven for even a single dimension. There are, however, examples of collections of MUBs which cannot be completed, even in dimensions for which complete systems exist; see e.g.\ \cite{szanto16} and the references therein.

\subsection{Combinatorial $k$-nets}

Many have observed that complete collections of MUBs closely resemble finite affine and projective planes, two combinatorial designs from finite geometry easily shown to be equivalent; see \cite{bengtsson17} for a good overview and note the comparison drawn in \cite{wootters06} between orthogonal projections and lines, which we shall use to provide a common framework for our objects of interest. In particular, these designs are known to exist for all prime-power orders and are conjectured to exist for no others. However, even when these structures, which are in some sense ``complete,'' do not exist, we have related ``incomplete'' structures called \textit{$k$-nets} (or equivalently: incomplete collections of mutually orthogonal Latin squares) that are closely tied to (incomplete) collections of MUBs; see e.g.\ the nice construction of \cite{bw05}.

\begin{definition}
A \textit{$k$-net} is an incidence structure consisting of a set $X$ (whose elements are called \textit{points}) and a collection of subsets of $X$ (called \textit{lines}) such that
\begin{enumerate}[(i)]
    \item the relation $||$ --- where $\ell_1 ||\,\ell_2$ means that $\ell_1 = \ell_2$ or $\ell_1 \cap \ell_2 = \emptyset$ --- is an equivalence relation dividing the set of lines into $k$ equivalence classes (called \textit{parallel classes});
    \item any two lines are either parallel or intersect at a single point;
    \item for any point $p$ and line $\ell$, $\exists$ a line parallel to $\ell$ containing $p$.
\end{enumerate}
\end{definition}

Note that this last property is equivalent to the union of each parallel class being all of $X$. Hence, for $k \geq 2$, every line of a parallel class must contain as many points as there are lines in any other class, and, if $k \geq 3$, this number $d$ --- called the \textit{order} of the $k$-net --- must be the same for all parallel classes. Thus, a \textit{$k$-net of order $d$} consists of $d^2$ points and $k$ parallel classes such that each class has exactly $d$ lines and each line has exactly $d$ points. A ($d+1$)-net of order $d$ is called an \textit{affine plane of order $d$}. Simple arguments show $k \leq d+1$ for all $k$-nets of order $d$, so one might say that affine planes are \textit{complete} $k$-nets.

Now, suppose that we have a ($d+1-k$)-net of order $d$ which we would like to complete to an affine plane. This may not be possible, but, if such a completion exists, we can ask whether it is unique. In \cite{bruck}, Bruck proved that if $k < \sqrt{d}+1$, then a ($d+1-k$)-net of order $d$ has at most $k d$ \textit{transversals}, i.e.\ sets having a single point intersection with each line of the net. Thus, if $k < \sqrt{d}+1$ and our net can be completed to an affine plane, each
transversal must be a line of the completion, i.e.\ we have uniqueness. So affine planes have a certain \textit{rigidity}; they are determined by a proper subset of their parallel classes and cannot be ``slightly modified'' while maintaining their defining properties.

Recall that for each prime power $q = p^\alpha$, there exists a finite field $\F_q$. $\F_q^2$ is naturally an affine plane of order $q$, with lines being subsets of the form $\{ (a,b)t + (x,y) \mid t \in \F_q \}$ where $a,b,x,y \in \F_q, (a,b) \neq 0$. If $q=p^2$ is the square of a prime, then $\Z_p^2$ (considered as a subset of $\F^2_{p^2}$) is \textit{not} a line of our plane but still intersects each line of $p^2-p$ parallel classes at a single point. Indeed, it is easy to check that for $a \in \F_{p^2} \setminus \Z_p$, the intersection
\begin{equation*}
    \Z_p^2 \cap \{ (a,1)t + (x,y) \mid t \in \F_{p^2} \}
\end{equation*}
cannot contain multiple points, and thus, by a simple counting argument, must contain exactly one point. Furthermore, all cosets of $\Z_p^2$ share this property, so we have an entire ``fake parallel class.'' Therefore, the bound of $k < \sqrt{d}+1$ required for rigidity is not only sufficient, but also necessary in general, i.e.\ it is sharp.

\subsection{Mutually unbiased bases}

Let us now return to MUBs. Since the mutually unbiased relation depends only on the one-dimensional subspaces spanned by basis vectors, we will regard two orthonormal bases of $\C^d$ as equivalent if they give the same coordinate axes, with basis vectors differing only by complex phase factors. 

As previously mentioned, it is well-known that if $\mathcal{E}_1, \dots, \mathcal{E}_r$ are MUBs in $\C^d$, then $r \leq d + 1$. Now take the collection to be complete, with $r = d+1$, and consider a basis vector $\mathbf{b}$ belonging to one of the last $k$ bases. Clearly, $\mathbf{b}$ is unbiased with respect to the first $d + 1 - k$ bases, i.e.\ $\abs{\inner{\mathbf{e}, \mathbf{b}}}^2 = 1/d$ for each $\mathbf{e} \in \mathcal{E}_1, \dots, \mathcal{E}_{d+1-k}$. One might wonder how small $k$ must be (with respect to $d$) for the \textit{reverse} to hold; that is, for the elements of the remaining $k$ bases to be the \textit{only} unit vectors (up to phase factors) unbiased with respect to $\mathcal{E}_1, \dots, \mathcal{E}_{d+1-k}$.

In Section \ref{sec:niceMUBs}, mirroring the example given for affine planes of order $p^2$, we shall construct in each prime-square dimension $d = p^2$ a complete collection of MUBs $\mathcal{F}_1, \dots, \mathcal{F}_{p^2+1}$ and an entire orthonormal basis of vectors which are unbiased with respect to $\mathcal{F}_1, \dots, \mathcal{F}_{p^2-p}$ but do \textit{not} belong to the final $p + 1$ bases (even accounting for phase factors). This demonstrates that an analogous notion of rigidity for MUBs also fails for $k \geq \sqrt{d}+1$. The similarities do not end here: in Section \ref{sec:rigidity}, we show that for $k \leq \sqrt{d}$, any unit vector unbiased with respect to $\mathcal{E}_1, \dots, \mathcal{E}_{d+1-k}$ belongs to the remaining $k$ bases (up to phase factors).

Actually, we prove an even stronger, more general theorem that also applies to both collections of MUBs which cannot be completed and classical $k$-nets. To understand this result, one should view MUBs as \textit{quasi-orthogonal maximal abelian *-subalgebras} (MASAs) of $M_d(\C)$. Two orthonormal bases $\mathcal{E}$ and $\mathcal{F}$ are mutually unbiased if and only if the corresponding MASAs $\mathcal{A}_\mathcal{E}$ and $\mathcal{A}_\mathcal{F}$ are \textit{quasi-orthogonal}, as we explain in the next section. Systems of quasi-orthogonal *-subalgebras of $M_d(\C)$ are studied both in general \cite{petz,pszw,weiner10} and in particular for their applications to MUBs \cite{weiner13,krusekamp14,szanto16}.

Quasi-orthogonal MASAs corresponding to a collection $\mathcal{E}_1, \dots, \mathcal{E}_{r}$ of MUBs in $\C^d$ span $M_d(\C)$ if and only if $r = d+1$, i.e.\ when the collection is complete. In this case, for each $j$,
\begin{equation*}
    \mathcal{A}_{\mathcal{E}_1} + \dots + \mathcal{A}_{\mathcal{E}_j} = \C I + \left(\mathcal{A}_{\mathcal{E}_{j+1}} + \dots + \mathcal{A}_{\mathcal{E}_{d+1}}\right)^\perp,
\end{equation*}
and so a unit vector $\mathbf{b}$ is unbiased with respect to $\mathcal{E}_{k+1}, \dots, \mathcal{E}_{d+1}$ if and only if the orthogonal projection onto the one-dimensional subspace $\C \mathbf{b}$ lies in the subspace sum $\sum_{j=1}^k \mathcal{A}_{\mathcal{E}_j}$. For this reason, rather than proving the desired rigidity property for complete systems, we will show the following, stronger statement: if $\mathcal{A}_1, \dots, \mathcal{A}_k$ are quasi-orthogonal MASAs in $\C^d$ with $k \leq \sqrt{d}$, then the only rank-one orthogonal projections in $\sum_{j=1}^k \mathcal{A}_{\mathcal{E}_j}$ are those of the listed subalgebras.

\subsection{Generalized rigidity and applications}

Interestingly, our argument does not fully require that the objects of interest lie in $M_d(\C)$.  In the next section, we develop a common framework for MUBs and classical $k$-nets, introducing the notion of a \textit{$k$-net over an algebra $\mathfrak{A}$}, where $\mathfrak{A}$ is a finite-dimensional $C^*$-algebra 
with its normalized trace $\tau$. 
These generalized nets are comprised of certain orthogonal projections, which correspond to the rank one projections of a system of quasi-orthogonal MASAs when $\mathfrak{A} = M_d(\C)$. When $\mathfrak{A}$ is the commutative algebra of complex functions on a finite set $X$, then these projections are the indicator functions of the lines of a classical $k$-net on $X$.

In Section \ref{sec:rigidity}, we shall prove our main rigidity theorem: if a collection of orthogonal projections ${\mathcal{N} \subset \mathfrak{A}}$ forms a $k$-net of order $d$ over $\mathfrak{A}$, with $k \leq \sqrt{d}$, then any orthogonal projection ${P \in \vspan(\mathcal{N})}$ with $\tau(P) = 1/d$ must belong to $\mathcal{N}$. In case of a combinatorial $k$-net, this is a dual version of the cited theorem of Bruck. Indeed, consider a $k$-net $\mathcal{S}$ of order $d$ that can be completed to an affine plane $\mathcal{P}$, and let $\mathcal{S}' = \mathcal{P} \setminus \mathcal{S}$ be the ($d+1-k$)-net formed by the parallel classes not in $\mathcal{S}$. One has that $\ell$ is a transversal of $\mathcal{S}'$ if and only if it consists of $d$ points and its indicator function is a linear combination of indicator functions of the lines of $\mathcal{S}$. Thus in this case, our theorem and that of Bruck are essentially\footnote{Since $k \in \N$, the inequalities $k \leq \sqrt{d}$ (our bound) and $k < \sqrt{d}+1$ (Bruck's bound) coincide when $d$ is a square, but, in general, the latter is better by $1$. We believe that by some elementary but rather cumbersome computation ---
see Remark \ref{betterboundremark} ---
we could have tightened our bound and reproduced that of Bruck. However, when $d = p^2$, our bound is already tight, while,
in general, it is likely that even Bruck's bound is suboptimal.
For this reason, we decided against pursuing the issue further.} the same, with ours being stated in terms of the $k$-net $\mathcal{S}$ and Brook's in terms of the ($d+1-k$)-net $\mathcal{S}'$. However, even in the commutative classical case --- since in general a $k$-net cannot be completed to an affine plane --- neither our theorem nor that of Bruck seems to directly imply the other.

We find it quite interesting in itself that a common framework can uncover such non-trivial properties of very different mathematical structures. However, our work would be incomplete without discussing some applications of the rigidity result. Hence, in Section \ref{sec:niceMUBs}, we consider systems of MUBs constructed from certain unitary projective group representations, so-called \textit{nice error bases} with abelian index groups (e.g.\ constructions relying on discrete Weyl operators or generalized Pauli matrices). Many of the known complete systems of MUBs are given in this form, so such constructions certainly warrant closer analysis.

Suppose that one uses a nice error basis with abelian index group to construct a collection of $d + 1 - k$ MUBs, with $k \leq \sqrt{d}$. Exploiting the symmetries given by the underlying group representation, we show that the unique completion of this system (if it exists) must be constructed from the same nice error basis. Thus, when $k \leq \sqrt{d}$, the weak statement ``this collection cannot be completed \textit{using our nice error basis}'', cf. \cite{mandayam14}, becomes the following stronger one: ``this collection cannot be completed.'' In particular, we discuss the example treated in \cite{szanto16}. There, in every prime-square dimension $d = p^2$ with $p \equiv 3 \pmod 4$, a complete collection of MUBs $\mathcal{E}_1, \dots, \mathcal{E}_{p^2+1}$ and two additional bases $\mathcal{B}_1, \mathcal{B}_2$ are given such that $\mathcal{E}_1, \dots, \mathcal{E}_{p^2-p}$ together with $\mathcal{B}_1, \mathcal{B}_2$ form a \textit{strongly unextendible system}, meaning that no unit vector exists unbiased to all of them. Our result implies that adding even just one of $\mathcal{B}_1, \mathcal{B}_2$ to the collection $\mathcal{E}_1, \dots, \mathcal{E}_{p^2-p}$ gives a system admitting no completion.

\section{A common framework for k-nets and MUBs}\label{sec:framework}

We now extend the notion of classical $k$-nets and MUBs to a more general setting, the basis for which is a finite-dimensional $C^*$-algebra.

\subsection{Finite-dimensional $C^*$-algebras}

Let $\mathfrak{A}$ be a complex vector space equipped with a bilinear, associative product (written as multiplication) and a conjugate-linear, anti-automorphic involution $A \mapsto A^*$, i.e.\ let $\mathfrak{A}$ be a complex \textit{*-algebra}. Recall that $\mathfrak{A}$ is a \textit{$C^*$-algebra} if it admits a norm $\norm{\cdot}$ such that  $\norm{AB} \leq \norm{A}\norm{B}$ and
$\norm{A^*A} = \norm{A}^2$
for all $A,B \in \mathfrak{A}$, and $\mathfrak{A}$ is complete with respect to the induced metric $d(A,B) = \norm{A-B}$. (This latter property is of no concern to us since we shall remain in finite dimensions.) $\mathfrak{A}$ is said to be \textit{unital} if it has an element $I \in \mathfrak{A}$ satisfying $IA=AI=A$ for 
all $A \in \mathfrak{A}$; note that, if it exists, $I$ is uniquely determined by this property.

Elements $A$ and $U$ of a $C^*$-algebra $\mathfrak{A}$ satisfying $A = A^*$ and $U^*U = UU^* = I$ are called \textit{self-adjoint} and \textit{unitary}, respectively, while elements $P \in \mathfrak{A}$ with $P = P^2 = P^*$ are called \textit{orthogonal projections}. We say that an orthogonal projection is \textit{minimal} if it cannot be written as the sum of two non-zero orthogonal projections.

Every subalgebra of the set of complex $n \times n$ matrices that is closed under taking adjoints is naturally a $C^*$-algebra (where the notion of an orthogonal projection coincides with the usual geometric one). In fact, every finite-dimensional $C^*$-algebra can be realized this way, as the following characterization confirms (see e.g.\ \cite{davidson96} for a proof).

\begin{proposition}\label{finiteCstar}
Every $d$-dimensional $C^*$-algebra $\mathfrak{A}$ is *-isomorphic to the direct sum of full matrix algebras
\begin{equation*}
    \mathfrak{A} \cong \bigoplus_{j=1}^k M_{n_j}(\C),
\end{equation*}
for some $n_1, \dots, n_k$ satisfying $\sum_{j=1}^k n_j^2 = d$, and all such *-isomorphisms are unitarily equivalent. 
\end{proposition}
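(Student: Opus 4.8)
\emph{Proof sketch.}
The plan is to run the classical Wedderburn--Artin argument, using the $C^*$-norm to keep all the idempotents and isometries ``orthogonal,'' and then to handle uniqueness via innerness of matrix-algebra automorphisms. First I would realize $\mathfrak{A}$ faithfully on a finite-dimensional Hilbert space. A finite-dimensional $C^*$-algebra is automatically unital, and it carries a faithful positive linear functional $\varphi$ (the positive cone is finite-dimensional, so a strictly positive element exists); the GNS construction applied to $\varphi$ produces a $*$-representation $\pi$ on a finite-dimensional $H$, and faithfulness of $\varphi$ forces $\pi$ to be faithful. Hence we may assume $\mathfrak{A} \subseteq B(H) = M_n(\C)$ is a unital $*$-subalgebra.

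Next I would split $\mathfrak{A}$ over its center. The center $Z = \mathfrak{A} \cap \mathfrak{A}'$ is a commutative, unital, norm-closed $*$-subalgebra of $M_n(\C)$, hence spanned by pairwise orthogonal minimal projections $z_1, \dots, z_k$ with $\sum_j z_j = I$; this gives $\mathfrak{A} = \bigoplus_{j=1}^k \mathfrak{A} z_j$ as a direct sum of $C^*$-algebras, each summand $\mathfrak{B}_j := \mathfrak{A} z_j$ having unit $z_j$ and trivial center. So it suffices to show that a finite-dimensional $C^*$-algebra $\mathfrak{B}$ with trivial center is $*$-isomorphic to some $M_m(\C)$. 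Pick a minimal nonzero orthogonal projection $p \in \mathfrak{B}$ (it exists by finite-dimensionality: take a nonzero projection of least rank in a faithful representation). Minimality forces $p\mathfrak{B}p = \C p$: if $x = x^* \in p\mathfrak{B}p$ were not a multiple of $p$, a nontrivial spectral projection of $x$ would be a proper subprojection of $p$; and $p\mathfrak{B}p$ is spanned by its self-adjoint elements. Since $Z(\mathfrak{B})$ is trivial, $\mathfrak{B}$ is simple, so $p_i\mathfrak{B}p_1 \neq 0$ for every minimal projection $p_i$; writing the unit of $\mathfrak{B}$ as $\sum_{i=1}^m p_i$ with $p_i$ minimal, polar decomposition of a nonzero element of $p_i\mathfrak{B}p_1$ (rescaled, using that $p_1\mathfrak{B}p_1 = \C p_1$) yields partial isometries $v_i \in \mathfrak{B}$ with $v_i^* v_i = p_1$ and $v_i v_i^* = p_i$. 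Setting $e_{il} := v_i v_l^*$ gives a full system of $m \times m$ matrix units. Finally the Peirce decomposition $\mathfrak{B} = \bigoplus_{i,l} p_i\mathfrak{B}p_l$ together with $p_i\mathfrak{B}p_l = e_{il}(p_l\mathfrak{B}p_l) = \C e_{il}$ shows $\{e_{il}\}$ is a vector space basis, so $\mathfrak{B} \xrightarrow{\sim} M_m(\C)$. Assembling the blocks gives $\mathfrak{A} \cong \bigoplus_j M_{n_j}(\C)$, and counting dimensions gives $\sum_j n_j^2 = \dim \mathfrak{A} = d$.

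For the uniqueness clause, the key fact is that every $*$-automorphism of $M_m(\C)$ is inner: it carries the standard matrix units to another system $\{f_{il}\}$, and, choosing a unit vector $\xi$ in the range of $f_{11}$, the vectors $\zeta_i := f_{i1}\xi$ form an orthonormal basis whose associated unitary $U$ implements the automorphism. Given two $*$-isomorphisms $\phi, \psi : \mathfrak{A} \to \bigoplus_j M_{n_j}(\C)$, the composite $\phi \circ \psi^{-1}$ is a $*$-automorphism of $\bigoplus_j M_{n_j}(\C)$; it permutes the minimal central projections (matching isomorphic blocks), so it is a block permutation followed by an inner automorphism on each block, i.e.\ conjugation by a unitary of $\bigoplus_j M_{n_j}(\C)$. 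In particular the multiset $\{n_j\}$ is an invariant of $\mathfrak{A}$.

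The main obstacle is the factor case: verifying that minimality of $p$ gives $p\mathfrak{B}p = \C p$, that triviality of the center yields simplicity (so the off-diagonal corners $p_i\mathfrak{B}p_l$ are nonzero), and then assembling the $v_i$ and checking the matrix-unit relations $e_{il}e_{rs} = \delta_{lr}e_{is}$, $e_{il}^* = e_{li}$, $\sum_i e_{ii} = I$. The remaining ingredients --- the faithful GNS representation, the central decomposition, the Peirce count, and innerness of automorphisms --- are comparatively routine, though they each rely on the same minimal-projection idea (and on the classification of finite-dimensional commutative $C^*$-algebras, which is the commutative instance of it).
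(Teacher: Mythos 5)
Your proposal is correct and follows essentially the same route as the proof the paper defers to (Davidson's standard structure theorem for finite-dimensional $C^*$-algebras): a faithful finite-dimensional representation, decomposition over the minimal central projections, matrix units built from a minimal projection in each factor, and uniqueness via innerness of $*$-automorphisms of $M_m(\C)$. The only cosmetic caveat is that when the automorphism $\phi\circ\psi^{-1}$ permutes blocks nontrivially, the implementing unitary lives in $B\bigl(\bigoplus_j \C^{n_j}\bigr)$ rather than in $\bigoplus_j M_{n_j}(\C)$ itself, which is the sense in which ``unitarily equivalent'' should be read.
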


In particular, each finite-dimensional $C^*$-algebra $\mathfrak{A}$ is unital and possesses a \textit{canonical trace} $\Tr$; that is, a linear functional $\Tr: \mathfrak{A} \to \C$ satisfying $\Tr(AB) = \Tr(BA)$ for all $A,B \in \mathfrak{A}$ and assigning $1$ to all minimal projections of $\mathfrak{A}$. The \textit{canonical normalized trace} $\tau:\mathfrak{A} \to \C$ is the standard trace scaled such that $\tau(I) = 1$; i.e.\ $\tau = \frac{1}{\Tr(I)} \Tr$.
\vspace{-1mm}

The set $\C^X$ of complex functions on a finite set $X$ also forms a finite-dimensional $C^*$-algebra under pointwise multiplication and conjugation. In this sense, a function $f \in \C^X$ is an orthogonal projection if and only if it is the \textit{indicator function} of a subset; i.e.\ iff $f = \chi_H$ for some $H \subset X$, where $\chi_H$ is the function taking $1$ on elements of $H$ and zero on $X \setminus H$. Note also that the canonical trace of $\C^X$ is simply the summation over all points: $\Tr(f) = \sum_{x \in X} f(x)$.

Every finite-dimensional $C^*$-algebra $\mathfrak{A}$ is naturally an \textit{inner product space} with inner product (sometimes called the \textit{Hilbert-Schmidt inner product}) given by the formula
\begin{equation*}
    \inner{A,B} = \tau(A^*B),
\end{equation*}
where $\tau$ is its canonical normalized trace. This is a consequence of the fact that the canonical trace is positive and faithful, i.e.\ that $\Tr(A^*A) > 0$ for $A \neq 0$. The corresponding norm is
\begin{equation*}
    \norm{A}_2 = \sqrt{\inner{A,A}} = \sqrt{\tau(A^*A)}.
\end{equation*}
In what follows, we shall exploit that this inner product must satisfy the \textit{Cauchy-Schwarz inequality}; that is, for every $X,Y \in \mathfrak{A}$,
\begin{equation*}
    \abs{\inner{X,Y}} \leq \norm{X}_2\norm{Y}_2,
\end{equation*}
with equality holding if and only if $X$ and $Y$ are linearly dependent.

We are frequently concerned with subalgebras of $\mathfrak{A}$ that are closed under the taking adjoints and contain $I \in \mathfrak{A}$, i.e.\! \textit{unital *-subalgebras} of $\mathfrak{A}$. Two such subalgebras $\mathcal{A}, \mathcal{B} \subset \mathfrak{A}$, as linear subspaces, cannot be orthogonal with respect to the Hilbert-Schmidt inner product since $0 \neq I \in \mathcal{A} \cap \mathcal{B}$. We will say, however, that they are \textit{quasi-orthogonal} if their traceless parts $\mathcal{A} \ominus \C I$ and $\mathcal{B} \ominus \C I$ are orthogonal, where
\begin{equation*}
    \mathcal{X} \ominus \C I = \mathcal{X} \cap (\C I)^\perp = \{ X \in \mathcal{X} \mid \tau(X) = 0 \}.
\end{equation*}
It is easy to see that $\mathcal{A}$ and $\mathcal{B}$ are quasi-orthogonal 
if and only if
\begin{equation*}
    \tau(AB) = \tau(A)\tau(B)
\end{equation*}
for all $A \in \mathcal{A}, \, B \in \mathcal{B}$.
For more on quasi-orthogonality, see \cite{petz,weiner10}.

\subsection{MUBs and quasi-orthogonal MASAs}

A maximal set of commuting operators $\mathcal{A} \subset M_n(\C)$ automatically forms a subalgebra 
containing the identity. When $\mathcal{A}$ is also closed under taking adjoints, it is called a 
\textit{maximal abelian *-subalgebra} (MASA). Since $X + X^*$ and $i(X - X^*)$ are always self-adjoint and $X = \frac{1}{2}(X + X^*) - \frac{i}{2} i(X - X^*)$, every *-subalgebra $\mathcal{A} \subset M_n(\C)$ is linearly spanned by its self-adjoints. However, commuting self-adjoint matrices can always be simultaneously diagonalized in some orthonormal basis. Thus, a MASA of $M_n(\C)$ is nothing but the set of all diagonal matrices $\mathcal{A}_\mathcal{E}$ in an orthonormal basis $\mathcal{E}$. Moreover, it is also clear that $\mathcal{A}_\mathcal{E} = \mathcal{A}_\mathcal{F}$ if and only if $\mathcal{E}$ and $\mathcal{F}$ have the same set of coordinate axes.

If $\mathbf{e}$ and $\mathbf{f}$ are unit vectors in $\C^n$ with orthogonal projections onto their one-dimensional subspaces given by $P$ and $Q$, respectively, then $\Tr(PQ) = \abs{\inner{\mathbf{e}, \mathbf{f}}}^2$. Thus, $\mathcal{E} = \{ \mathbf{e}_1, \dots, \mathbf{e}_d \}$ and $\mathcal{F} = \{ \mathbf{f}_1, \dots, \mathbf{f}_d \}$ are mutually unbiased if and only if the respective orthogonal projections onto their axes, $P_1, \dots, P_d$ and $Q_1, \dots, Q_d$, satisfy the relation
\begin{equation*}
    \Tr(P_j Q_l) = 1/d = \tfrac{1}{d} \Tr(P_j) \Tr(Q_\ell)
\end{equation*}
for all $j,\ell$. Since $\mathcal{A}_\mathcal{E}$ and $\mathcal{A}_\mathcal{F}$ are spanned by these projections, it follows that $\mathcal{E}$ and $\mathcal{F}$ are mutually unbiased if and only if $\mathcal{A}_\mathcal{E}$ and $\mathcal{A}_\mathcal{F}$ are quasi-orthogonal. Thus, one can consider collections of (pairwise) quasi-orthogonal MASAs instead of collections of MUBs, as is often done in the literature  
\cite{petz, weiner13,krusekamp14, szanto16}.

\subsection{Generalized $k$-nets}

Having established the necessary background, we can finally give a formal definition to our generalization.

\begin{definition}
Let $\mathfrak{A}$ be a finite-dimensional $C^*$-algebra with canonical normalized trace $\tau$. We shall say that a collection of orthogonal projections $\mathcal{N} \subset \{P \in \mathfrak{A} \mid P^2 = P^* = P\}$ is a \textit{$k$-net over $\mathfrak{A}$}, if it satisfies the following properties:
\begin{enumerate}[(i)]
    \item the relation ``$P = Q$ or $PQ = 0$'' is an equivalence relation on $\mathcal{N}$ dividing
    $\mathcal{N}$ into $k$ equivalence classes;
    \item if $P,Q \in \mathcal{N}$ fall into different equivalence classes, then \\ $\tau(PQ) = \frac{1}{\dim(\mathfrak{A})}$;  
    \item the elements in each equivalence class sum to the identity $I$.
\end{enumerate}
\end{definition}

In the context of $k$-nets, we shall refer to elements of $\mathcal{N}$ as \textit{lines} and to the introduced equivalence classes as \textit{parallel classes}. If $k$ is at least two and $P_1, \dots, P_q$ are the lines of
one parallel class while $Q_1, \dots, Q_r$ are the lines of another, then
\begin{equation*}
    \tau(P_j) = \tau(P_j I) = \tau\Big(P_j \sum_{\ell=1}^r Q_\ell\Big) =  \sum_{\ell=1}^r \tau(P_j Q_l) = \frac{r}{\dim(\mathfrak{A})},
\end{equation*}
showing that the trace of the lines within a class is constant. Since this trace value also gives the number of elements in any other parallel class, we have that, for $k \geq 3$, each parallel class must have the same number $d$ of lines and
\begin{equation*}
    1 = \tau(I) = \sum_{j,\ell=1}^d \tau(P_j Q_\ell) = \frac{d^2}{\dim(\mathfrak{A})},
\end{equation*}
showing that $\dim(\mathfrak{A}) = d^2$. When each parallel class has the same number $d$ of lines (which we just noted was automatic for $k\geq 3$), we shall say that $\mathcal{N}$ is a \textit{$k$-net of order $d$}. By now, it is trivial to observe that this definition of $k$-nets over finite dimensional $C^*$-algebras generalizes the notions of both classical $k$-nets and MUBs. However, we feel that this is worth stating more formally (although we omit a proof).

\begin{proposition}
Let $X$ be a finite set of points and $n$ a natural number.
\begin{itemize}
    \item $\mathcal{N}$ is a $k$-net of order $d$ over $\C^X$ if and only if it is the set of indicator functions of the lines of a (classical) $k$-net of order $d$ on X (and in this case $|X|=d^2$).
    \item $\mathcal{N}$ is a $k$-net of order $d$ over $M_n(\C)$ if and only if it is the set of orthogonal projections onto the axes of the bases of a collection of MUBs (and in this case $n=d$).
\end{itemize}
\end{proposition}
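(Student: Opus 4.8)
The plan is to prove both equivalences by unwinding the three axioms of a $k$-net over $\mathfrak{A}$ through the two dictionaries already set up in this section: over $\C^X$, the orthogonal projections are precisely the indicator functions $\chi_H$ with $H\subseteq X$; over $M_n(\C)$, the maximal abelian $*$-subalgebras are precisely the algebras $\mathcal{A}_\mathcal{E}$ of matrices diagonal in an orthonormal basis $\mathcal{E}$ (taken up to phases), spanned by the rank-one projections onto the axes of $\mathcal{E}$, with mutual unbiasedness being quasi-orthogonality of the corresponding algebras. I would assume throughout that $k\geq 2$; for $k\leq 1$ the notion of ``order'' carries essentially no information and the statement degenerates, so I would not treat that case.

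For the commutative part I would first record that $\dim(\C^X)=\Tr(\chi_X)=|X|$ and $\chi_H\chi_K=\chi_{H\cap K}$, so that $\tau(\chi_H\chi_K)=|H\cap K|/|X|$ and $H\mapsto\chi_H$ is a bijection onto the orthogonal projections of $\C^X$ under which $\chi_H\chi_K=0\iff H\cap K=\emptyset$. Given a classical $k$-net on $X$ and putting $\mathcal{N}=\{\chi_\ell : \ell \text{ a line}\}$, axiom (i) of a $k$-net over $\C^X$ is exactly the assertion that $||$ is an equivalence relation with $k$ classes; axiom (ii), that $\tau(\chi_\ell\chi_{\ell'})=1/|X|$ whenever $\ell,\ell'$ lie in different classes, says $|\ell\cap\ell'|=1$, which together with (i) is the second classical axiom; and axiom (iii), that each parallel class sums to $\chi_X$, says each class partitions $X$, the third classical axiom as noted in the text. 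Conversely, starting from a $k$-net of order $d$ over $\C^X$, the same three translations produce a classical $k$-net, and to obtain $|X|=d^2$ I would invoke the trace computation following the definition of a generalized net: since $k\geq 2$, every line $\ell$ satisfies $\tau(\chi_\ell)=d/|X|$, hence $|\ell|=d$, and since the $d$ lines of a parallel class are disjoint with union $X$ this forces $|X|=d^2$ (each line then having $d$ points).

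For the matrix part I would use $\dim(M_n(\C))=n^2$, $\tau=\tfrac1n\Tr$, the fact that the minimal (equivalently, rank-one) projections are exactly those with $\Tr=1$, and the identity $\Tr(PQ)=\abs{\inner{\mathbf{e},\mathbf{f}}}^2$ for rank-one projections $P,Q$ onto unit vectors $\mathbf{e},\mathbf{f}$. For the ``if'' direction, from MUBs $\mathcal{E}_1,\dots,\mathcal{E}_k$ in $\C^n$ I would take $\mathcal{N}$ to be the set of rank-one projections onto all of their axes; since distinct MUBs share no axis (for $n\geq2$, as $1/n\neq1$), $\mathcal{N}$ breaks into $k$ blocks of $n$ projections, two projections from the same block being equal or orthogonal and two from different blocks being distinct with $PQ\neq0$ (as $\abs{\inner{\mathbf{e},\mathbf{f}}}^2=1/n\neq0$), so ``$P=Q$ or $PQ=0$'' is exactly the partition into blocks, which is axiom (i); axiom (ii) reads $\tau(PQ)=\tfrac1n\cdot\tfrac1n=1/n^2$, and axiom (iii) is the resolution of identity $\sum_{\mathbf{e}\in\mathcal{E}_i}P_{\mathbf{e}}=I$, so $\mathcal{N}$ is a $k$-net of order $n$. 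For the ``only if'' direction, given a $k$-net of order $d$ over $M_n(\C)$ with $k\geq2$, every line $P$ has $\tau(P)=d/n^2$; since distinct lines of one parallel class $P\neq Q$ satisfy $PQ=0$ and the $d$ lines of a class sum to $I$, summing traces gives $1=d\cdot d/n^2$, hence $n=d$, $\tau(P)=1/d$, and each line is rank-one. A parallel class is then a complete system of $d$ pairwise-orthogonal rank-one projections, i.e.\ the axis-projections of a unique orthonormal basis $\mathcal{E}_i$ of $\C^d$; and for $P,Q$ in different classes $\abs{\inner{\mathbf{e},\mathbf{f}}}^2=\Tr(PQ)=d\,\tau(PQ)=1/d$, so $\mathcal{E}_1,\dots,\mathcal{E}_k$ are pairwise mutually unbiased and $\mathcal{N}$ is the set of their axis-projections, with $n=d$.

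The whole argument is a bookkeeping exercise once these two dictionaries are in hand. The only place a reader might pause — the main (minor) obstacle — is pinning down $|X|=d^2$ (respectively $n=d$ together with the rank-one-ness of the lines) in the ``only if'' directions when $k=2$: the displayed trace computation after the definition already gives $\dim(\mathfrak{A})=d^2$ for $k\geq3$, and for $k=2$ one recovers the same conclusion as above by combining that computation with the fact that the lines of a single parallel class partition the identity.
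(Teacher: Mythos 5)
Your proof is correct; the paper explicitly omits a proof of this proposition as "trivial to observe," and your argument is precisely the intended bookkeeping, unwinding the three axioms through the two dictionaries (indicator functions as the orthogonal projections of $\C^X$, and MASAs/rank-one axis projections with quasi-orthogonality as unbiasedness in $M_n(\C)$) that the paper sets up just before the statement. Your handling of the $k=2$ case for deducing $|X|=d^2$ (resp.\ $n=d$ and rank-one-ness) is the right way to cover the one spot where the paper's displayed computation only explicitly treats $k\geq 3$.
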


As is well-known in the MUB case, it also holds in this general setting that a $k$-net of order $d$ over an algebra $\mathfrak{A}$ determines a collection of $k$ quasi-orthogonal, $d$-dimensional, abelian, and unital *-subalgebras of $\mathfrak{A}$. Indeed, let $\mathcal{A}_j$ be the linear subspace spanned by the lines of the $j$th parallel class. By the third defining property of $k$-nets, $I \in \mathcal{A}_j$. As it is spanned by self-adjoint elements, $\mathcal{A}_j$ is also closed under taking adjoints. Since, within a parallel class, the product of two distinct lines is zero and the square of any line is itself, $\mathcal{A}_j$ is also closed under multiplication. But for projections $P$ and $Q$, the relation $PQ = 0$ implies $QP = 0$, so $\mathcal{A}_j$ is actually a commutative algebra. Moreover, since its spanning projections are mutually orthogonal (and nonzero), the dimension of this algebra coincides with the number $d$ of lines in the $j$th parallel class. Finally, if $A \in \mathcal{A}_\ell$ and $B \in \mathcal{A}_{\ell'}$ for $\ell \neq \ell'$, then we have scalars $\lambda_1, \dots, \lambda_d$ and $\mu_1, \dots, \mu_d$ such that
\begin{equation*}
    A = \lambda_1 P_1 + \dots + \lambda_d P_d, \quad B = \mu_1 Q_1 + \dots + \mu_d Q_d,
\end{equation*}
where the $P$ and $Q$ operators are the lines of the two parallel classes. Hence,
\begin{align*}
    \tau(AB) &= \tau\big(\big(\lambda_1 P_1 + \dots + \lambda_d P_d\big)\big(\mu_1 Q_1 + \dots + \mu_d Q_d\big)\big)\\
    &= \tfrac{1}{d^2}(\lambda_1 + \dots + \lambda_d)(\mu_1 + \dots + \mu_d) = \tau(A)\tau(B),
\end{align*}
so the subalgebras are indeed pairwise quasi-orthogonal. Let us see now what we can say about the converse. 

A *-subalgebra $\mathcal{A}$ of a finite-dimensional  $C^*$-algebra is linearly spanned by its minimal projections. In the commutative case, the product of two orthogonal projections $P,Q \in \mathcal{A}$ is automatically an orthogonal projection as well; thus, if $P \neq Q$ are minimal, then $PQ = 0$. If $\mathcal{A}$ is also unital, its minimal projections must sum to the identity and $\dim(\mathcal{A})$ is simply their count. Suppose now that $\mathcal{A}$ and $\mathcal{B}$ are two such subalgebras with minimal projections $P_1, \dots, P_d$ and $Q_1, \dots, Q_d$, respectively. If $\mathcal{A}$ and $\mathcal{B}$ are quasi-orthogonal, then $\tau(P_j Q_\ell)=\tau(P_j)\tau(Q_\ell)$; however, in general it does not necessarily follow that $\tau(P_j Q_\ell) = 1/d^2$. Nevertheless, it is not difficult to see that it does so in our two cases of interest: when $\mathfrak{A}$ is isomorphic to $M_d(\C)$ or $\C^X$. Regardless, the fact that $k$-nets of order $d$ give rise to $k$ quasi-orthogonal $d$-dimensional subalgebras implies that the usual argument regarding the maximum possible $k$ value can be repeated in general.

\begin{proposition}\label{prop:kNetBound}
Let  $\mathcal{N}$ be a $k$-net of order $d$ over $\mathfrak{A}$. Then $k \leq d + 1$, and equality holds if and only if $\mathcal{N}$ linearly spans $\mathfrak{A}$, in which case we shall say that that $\mathcal{N}$ is \textit{complete}.
\end{proposition}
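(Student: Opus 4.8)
The plan is to leverage the structure established just before the statement: a $k$-net $\mathcal{N}$ of order $d$ gives rise to $k$ quasi-orthogonal $d$-dimensional unital *-subalgebras $\mathcal{A}_1, \dots, \mathcal{A}_k$ of $\mathfrak{A}$, where $\mathcal{A}_j = \vspan(\text{$j$th parallel class})$, and that $\dim(\mathfrak{A}) = d^2$ (assuming $k \geq 3$; the cases $k \leq 2$ satisfy $k \leq d+1$ trivially since $d \geq 1$, so we may focus on $k \geq 3$). First I would observe that $\vspan(\mathcal{N}) = \mathcal{A}_1 + \dots + \mathcal{A}_k$, so the completeness claim reduces to: $\mathcal{A}_1 + \dots + \mathcal{A}_k = \mathfrak{A}$ if and only if $k = d+1$. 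The key is a dimension count on this subspace sum.

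The main step is to compute $\dim(\mathcal{A}_1 + \dots + \mathcal{A}_k)$ using the quasi-orthogonality. Each $\mathcal{A}_j$ decomposes as $\C I \oplus (\mathcal{A}_j \ominus \C I)$, and $\dim(\mathcal{A}_j \ominus \C I) = d - 1$. By quasi-orthogonality, the traceless parts $\mathcal{A}_j \ominus \C I$ are pairwise orthogonal with respect to the Hilbert–Schmidt inner product, hence the sum $\sum_j (\mathcal{A}_j \ominus \C I)$ is a \emph{direct} sum and is orthogonal to $\C I$. Therefore
\begin{equation*}
    \dim(\mathcal{A}_1 + \dots + \mathcal{A}_k) = 1 + \sum_{j=1}^k \dim(\mathcal{A}_j \ominus \C I) = 1 + k(d-1).
\end{equation*}
Since this subspace sits inside $\mathfrak{A}$, which has dimension $d^2$, we get $1 + k(d-1) \leq d^2 = (d-1)(d+1) + 1$, i.e.\ $k(d-1) \leq (d+1)(d-1)$. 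For $d \geq 2$ this gives $k \leq d+1$ (and for $d = 1$ one checks directly that a $k$-net of order $1$ has $k \leq 2 = d+1$). Moreover, equality $k = d+1$ holds precisely when $1 + k(d-1) = d^2$, i.e.\ when $\dim(\mathcal{A}_1 + \dots + \mathcal{A}_k) = \dim(\mathfrak{A})$, which (both being finite-dimensional with one contained in the other) is equivalent to $\mathcal{A}_1 + \dots + \mathcal{A}_k = \mathfrak{A}$, that is, to $\vspan(\mathcal{N}) = \mathfrak{A}$.

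I expect the only mild subtlety to be the low-order bookkeeping ($k \leq 2$, or $d = 1$), where the derivation of $\dim(\mathfrak{A}) = d^2$ and the constancy of line counts across parallel classes was only justified for $k \geq 3$; these degenerate cases must be dispatched separately by an elementary observation. The genuinely substantive content — that quasi-orthogonality forces the traceless parts to be in direct sum, yielding the dimension formula $1 + k(d-1)$ — is immediate from the inner-product structure already set up in the previous subsection, so no real obstacle arises there.
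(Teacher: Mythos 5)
Your proposal is correct and follows essentially the same route as the paper: the traceless parts of the $k$ quasi-orthogonal $d$-dimensional subalgebras are pairwise orthogonal, so $\vspan(\mathcal{N})$ has dimension $k(d-1)+1$, and comparing with $\dim(\mathfrak{A})=d^2$ gives both the bound and the equality criterion. Your extra care with the degenerate cases $k\leq 2$ and $d=1$ is a reasonable refinement that the paper's one-line proof silently omits.
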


\begin{proof}
Consider the $k$ quasi-orthogonal, unital, $d$-dimensional *-sub\-al\-geb\-ras associated with the parallel classes of $\mathcal{N}$. Their traceless parts are orthogonal ($d-1$)-dimensional subspaces, so $\mathcal{N}$ spans a   $k(d-1) + 1$ dimensional space. The claim then follows because $\dim(\mathfrak{A})=d^2$.
\end{proof}

\section{A general rigidity theorem}\label{sec:rigidity}

We now have the framework and background in place to establish our primary result and its important corollary. Suppose $\mathfrak A$ is a finite dimensional $C^*$-algebra with normalized
canonical trace $\tau$.

\begin{theorem}\label{thm:rigidity} 
Let $\mathcal{N}$ be a $k$-net of order $d$ over $\mathfrak{A}$, and suppose that $k \leq \sqrt{d}$.
If $P\!=\!P^2\!=\!P^*\! \in {\rm Span}(\mathcal{N})$ with $\tau(P)=\frac{1}{d}$, then $P \in \mathcal{N}$.
\end{theorem}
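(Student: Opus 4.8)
The plan is to pass to the language of quasi-orthogonal subalgebras. Let $\mathcal{A}_1,\dots,\mathcal{A}_k$ be the $k$ pairwise quasi-orthogonal, $d$-dimensional, unital abelian $*$-subalgebras spanned by the parallel classes of $\mathcal{N}$, so that $\vspan(\mathcal{N})=\mathcal{A}_1+\dots+\mathcal{A}_k=\C I\oplus\bigoplus_{j}(\mathcal{A}_j\ominus\C I)$, and let $P^{(j)}_1,\dots,P^{(j)}_d$ denote the lines of the $j$-th class. Write $E_j\colon\mathfrak{A}\to\mathcal{A}_j$ for the trace-preserving conditional expectation (the Hilbert--Schmidt orthogonal projection of $\mathfrak{A}$ onto $\mathcal{A}_j$). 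Since $\tau(P^{(j)}_\ell)=1/d$, one has $E_j(P)=d\sum_\ell p^{(j)}_\ell P^{(j)}_\ell$ with $p^{(j)}_\ell:=\tau(P^{(j)}_\ell P)=\norm{PP^{(j)}_\ell}_2^2\geq 0$; because $E_j$ is unital and positive (hence an operator-norm contraction), $0\leq E_j(P)\leq I$, so $0\leq dp^{(j)}_\ell\leq 1$ and $\sum_\ell p^{(j)}_\ell=\tau(P)=1/d$. Decomposing $P$ along $\C I\oplus\bigoplus_j(\mathcal{A}_j\ominus\C I)$ gives $P=\frac1d I+\sum_j Y_j$ with $Y_j:=E_j(P)-\frac1d I$ traceless and pairwise orthogonal. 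Observe that ``$P\in\mathcal{N}$'' is equivalent to ``$p^{(j_0)}_{\ell_0}=\tfrac1d$ for some $j_0,\ell_0$'': if $p^{(j_0)}_{\ell_0}=\frac1d=\tau(P^{(j_0)}_{\ell_0})=\norm{P^{(j_0)}_{\ell_0}}_2^2$, then the Pythagorean identity $\norm{P^{(j_0)}_{\ell_0}}_2^2=\norm{PP^{(j_0)}_{\ell_0}}_2^2+\norm{(I-P)P^{(j_0)}_{\ell_0}}_2^2$ forces $(I-P)P^{(j_0)}_{\ell_0}=0$, i.e.\ $P^{(j_0)}_{\ell_0}\leq P$ as projections, and equal traces with faithfulness of $\tau$ give $P=P^{(j_0)}_{\ell_0}$.

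Next I would extract the two scalar identities that follow from $P$ being a trace-$1/d$ projection inside $\vspan(\mathcal{N})$. Applying $\inner{P,\cdot}$ to $P=P^*=P^2$ yields $\norm{P}_2^2=\tau(P)=1/d$, whence by orthogonality $\sum_j\norm{Y_j}_2^2=\frac1d-\frac1{d^2}=\frac{d-1}{d^2}$; since $\norm{Y_j}_2^2=\tau(E_j(P)^2)-\frac1{d^2}=d\sum_\ell(p^{(j)}_\ell)^2-\frac1{d^2}$, this unpacks to
\[
 \textstyle\sum_j\sum_\ell p^{(j)}_\ell=\frac kd,\qquad \sum_j\sum_\ell(p^{(j)}_\ell)^2=\frac{d+k-1}{d^3}.
\]

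The crux is to show that, when $k\leq\sqrt d$, these constraints are incompatible with $p^{(j)}_\ell<\frac1d$ for every line. The two identities by themselves are not enough — they are also satisfied by ``non-geometric'' data — so one must use the full force of $P^2=P$, i.e.\ the operator positivity $0\leq P\leq I$ with spectrum in $\{0,1\}$, and not merely its scalar shadow $\norm{P}_2^2=\tau(P)$. A natural route: bound the concentration $\sum_{j,\ell}(p^{(j)}_\ell)^2$ from above by combining, within each class, the Cauchy--Schwarz bound $\sum_\ell(p^{(j)}_\ell)^2\leq \frac1d\max_\ell p^{(j)}_\ell$ with a simultaneous control of the $k$ maxima that uses positivity of $P$, $\tau(P)=1/d$, and the cross-class relation $\tau(P^{(i)}_\ell P^{(j)}_m)=1/d^2$ — the point being that two distinct parallel classes cannot both ``concentrate'' $P$ onto a single line. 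Quantified correctly, the $k$ cross-contributions are absorbed by one Cauchy--Schwarz step in which a sum over the $k$ classes is weighed against a quantity of order $\sqrt d$, so the resulting estimate contradicts the second identity precisely once $k^2\leq d$. I expect this balancing of the $k$-fold cross-terms against the second-moment identity, with the right constants, to be the main obstacle, and it is where the hypothesis $k\leq\sqrt d$ enters; a more careful but more tedious version of the same estimate should recover Bruck's slightly sharper threshold $k<\sqrt d+1$.

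Granting the crux, some $p^{(j_0)}_{\ell_0}=\frac1d$, and by the equivalence noted in the first paragraph $P=P^{(j_0)}_{\ell_0}\in\mathcal{N}$, which completes the proof.
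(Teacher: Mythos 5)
Your setup is sound and matches the paper's framework: passing to the quasi-orthogonal abelian subalgebras, writing $P=\tfrac1d I+\sum_j Y_j$ with $Y_j\in\mathcal{A}_j\ominus\C I$ pairwise orthogonal, computing $\sum_j\norm{Y_j}_2^2=\tfrac{d-1}{d^2}$, and the observation that $P\in\mathcal{N}$ iff some $p^{(j_0)}_{\ell_0}=\tfrac1d$ (via the Pythagorean identity and faithfulness of $\tau$) are all correct. But the proof has a genuine gap exactly where the hypothesis $k\leq\sqrt d$ must be used: the ``crux'' paragraph is a plan, not an argument. You correctly note that the two moment identities alone are insufficient, but the route you propose does not obviously close. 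Concretely, positivity of $E_j$ already gives $p^{(j)}_\ell\leq\tfrac1d$ for free, and combining this with $\sum_\ell p^{(j)}_\ell=\tfrac1d$ yields only $\sum_{j,\ell}(p^{(j)}_\ell)^2\leq\tfrac{k}{d^2}$, which is \emph{larger} than the target value $\tfrac{d+k-1}{d^3}$ for every $k\geq 2$ --- so no contradiction arises, and everything hinges on the unproven claim that ``two distinct parallel classes cannot both concentrate $P$ onto a single line,'' for which you give no quantitative mechanism. Since the theorem is sharp (the paper exhibits a counterexample at $k=\sqrt d+1$ in every prime-square dimension), any such mechanism must be delicate; it cannot be a generic Cauchy--Schwarz absorption.

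For comparison, the paper's proof does not stay at the level of second moments of the coefficients $p^{(j)}_\ell$. It squares the identity $P-\tfrac1d I=\sum_j A_j$ and pairs the result against elements of a distinguished class $\mathcal{A}_r$ chosen so that $\norm{A_r}_2^2$ is at least average; this produces genuinely cubic information, namely control of $\tau(XA_jA_\ell)$ and ultimately of $\tau(A_r^3)$. From this one first derives a \emph{spectral gap} for $A_r$ (every eigenvalue is $\leq\lambda_-$ or $\geq\lambda_+$, where the discriminant condition is precisely where $k\leq\sqrt d$ enters), and then rules out both spectral alternatives by separate third-moment estimates (extremal spectra found via Lagrange multipliers) together with two nontrivial inequalities proved in the appendix. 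None of this machinery is present in, or clearly replaceable by, your sketch. To complete your proposal you would need to either reconstruct something equivalent to the spectral-gap-plus-third-moment argument, or supply an entirely different quantitative proof of the anti-concentration step; as written, the theorem is not proved.
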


\begin{corollary}\label{cor:completionUniqueness}
Let $\mathcal{N}$ be a $(d\!+\!1\!-\!k)$-net of order $d$ over $\mathfrak{A}$
with $k \leq \sqrt{d}$. If $\mathcal{N}$ can be completed to a full $(d\!+\!1)$-net $\widetilde{\mathcal{N}}$, then this completion is unique, and its extra lines are precisely the ``transversals'' of $\mathcal{N}$; i.e. $Q\in \widetilde{\mathcal{N}} \setminus \mathcal{N}$ if and only if 
$Q\!=\!Q^2\!=\!Q^*$ with
$\tau(QP) = \frac{1}{d^2}$ for all $P \in \mathcal{N}$.
\end{corollary}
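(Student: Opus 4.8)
We prove Corollary~\ref{cor:completionUniqueness} assuming Theorem~\ref{thm:rigidity}; write $\mathcal{M} := \widetilde{\mathcal{N}} \setminus \mathcal{N}$. The plan is to recognize $\mathcal{M}$ as a $k$-net of order $d$ over $\mathfrak{A}$ lying ``orthogonally'' to $\mathcal{N}$, and then apply the rigidity theorem to $\mathcal{M}$. For the first point: each parallel class of $\mathcal{N}$ already contains $d$ lines, so it is a full parallel class of the completion; hence $\widetilde{\mathcal{N}}$ decomposes into the $d{+}1{-}k$ classes of $\mathcal{N}$ together with $k$ further classes whose union is $\mathcal{M}$. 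The parallel relation ``$P = Q$ or $PQ = 0$'' of $\widetilde{\mathcal{N}}$ restricts to an equivalence relation on $\mathcal{M}$ with precisely these $k$ classes, each still summing to $I$; and two lines $P,Q$ in different classes of $\mathcal{M}$ lie in different classes of $\widetilde{\mathcal{N}}$, so $\tau(PQ) = 1/\dim(\mathfrak{A}) = 1/d^2$. Thus $\mathcal{M}$ satisfies all three net axioms, with order $d$.

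Since $\widetilde{\mathcal{N}}$ is a $(d{+}1)$-net, it is complete (Proposition~\ref{prop:kNetBound}). Let $\mathcal{A}_1, \dots, \mathcal{A}_{d+1}$ be the quasi-orthogonal unital abelian subalgebras spanned by its parallel classes, ordered so that $\mathcal{A}_1, \dots, \mathcal{A}_{d+1-k}$ come from $\mathcal{N}$. Then $\mathfrak{A} = \C I \oplus \bigoplus_{j=1}^{d+1}(\mathcal{A}_j \ominus \C I)$ and $\vspan(\mathcal{M}) = \C I \oplus \bigoplus_{j=d+2-k}^{d+1}(\mathcal{A}_j \ominus \C I)$. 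One direction of the claimed characterization is now immediate: if $Q \in \mathcal{M}$, then $Q = Q^2 = Q^*$, and for each $P \in \mathcal{N}$ the lines $P$ and $Q$ sit in distinct parallel classes of $\widetilde{\mathcal{N}}$, so axiom (ii) gives $\tau(QP) = 1/d^2$.

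For the converse, suppose $Q = Q^2 = Q^*$ satisfies $\tau(QP) = 1/d^2$ for all $P \in \mathcal{N}$. Summing over the $d$ lines $P$ of one parallel class of $\mathcal{N}$ (such a class exists, since $k \le \sqrt{d}$ gives $d{+}1{-}k \ge 1$) and using that they sum to $I$ yields $\tau(Q) = 1/d$. Furthermore, for every line $P_{j,\ell}$ of $\mathcal{N}$ we get $\inner{Q - \tfrac{1}{d} I,\, P_{j,\ell}} = \tau(QP_{j,\ell}) - \tfrac{1}{d}\tau(P_{j,\ell}) = \tfrac{1}{d^2} - \tfrac{1}{d^2} = 0$; as these lines span $\mathcal{A}_j$, the traceless self-adjoint element $Q - \tfrac{1}{d}I$ is orthogonal to $\mathcal{A}_j$ for each $j \le d{+}1{-}k$, whence $Q - \tfrac{1}{d}I \in \bigoplus_{j=d+2-k}^{d+1}(\mathcal{A}_j \ominus \C I)$ and so $Q \in \vspan(\mathcal{M})$. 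Since $\mathcal{M}$ is a $k$-net of order $d$ with $k \le \sqrt{d}$ and $Q$ is an orthogonal projection in $\vspan(\mathcal{M})$ with $\tau(Q) = 1/d$, Theorem~\ref{thm:rigidity} forces $Q \in \mathcal{M} = \widetilde{\mathcal{N}} \setminus \mathcal{N}$, completing the characterization.

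Uniqueness follows at once: if $\widetilde{\mathcal{N}}$ and $\widetilde{\mathcal{N}}'$ are two completions of $\mathcal{N}$, then every line of $\widetilde{\mathcal{N}}' \setminus \mathcal{N}$ is an orthogonal projection $Q$ with $\tau(QP) = 1/d^2$ for all $P \in \mathcal{N}$ (easy direction, applied inside $\widetilde{\mathcal{N}}'$), hence lies in $\widetilde{\mathcal{N}} \setminus \mathcal{N}$ (converse direction, applied inside $\widetilde{\mathcal{N}}$); by symmetry the two sets of extra lines agree, so $\widetilde{\mathcal{N}}' = \widetilde{\mathcal{N}}$. Given Theorem~\ref{thm:rigidity}, none of these steps is deep; the only part requiring care is the converse direction, i.e.\ translating the hypothesis on $Q$ into ``$\tau(Q) = 1/d$ and $Q \in \vspan(\mathcal{M})$'' --- exactly the input the rigidity theorem needs --- and it is there that the quasi-orthogonality of the net's parallel classes is used. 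I do not expect any real obstacle.
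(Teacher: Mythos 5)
Your proof is correct and takes essentially the same route as the paper: recognize the extra lines $\mathcal{M}$ as a $k$-net of order $d$ over $\mathfrak{A}$, translate the trace condition on $Q$ into $\tau(Q)=\tfrac{1}{d}$ and $Q\in\vspan(\mathcal{M})$, and invoke Theorem~\ref{thm:rigidity} (the paper establishes $\vspan(\mathcal{M})=\C I+(\sum_j\mathcal{B}_j)^\perp$ by a dimension count, you via the orthogonal decomposition of $\mathfrak{A}$ induced by the complete net --- the same fact). Your write-up is, if anything, more explicit about the easy direction, the verification that $\mathcal{M}$ is a $k$-net, and the final uniqueness deduction.
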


Concretely, this means that sufficiently large combinatorial $k$-nets and sets of MUBs can be completed in at most one way.

\begin{proof}[Proof of Theorem \ref{thm:rigidity}]

First, note that the case of $k=1$ is trivial, so we may assume that $k \geq 2$ and $d \geq 4$. Now, label the subalgebras of $\mathfrak{A}$ corresponding to the parallel classes of $\mathcal{N}$ by $\mathcal{A}_1, \dots, \mathcal{A}_k$. We can restate the fact that $P$ lies in the span of the lines of $\mathcal{N}$ as
\begin{equation}
    P \in \sum_{j=1}^k \mathcal{A}_j.
\end{equation}
The quasi-orthogonality requirement means that the traceless parts of these subalgebras are pairwise orthogonal, so we can uniquely express the traceless part of $P = P^*$ as
\begin{equation}\label{eq:decomp}
    P - \frac{1}{d}I = \sum_{j=1}^k A_j,
\end{equation}
where each $A_j \in \mathcal{A}_j \ominus \C I$ with $A_j = A_j^*$.  Observing that the $A_j$ operators are pairwise orthogonal 
(with respect to the Hilbert-Schmidt inner product)
and that $P^2 = P$, we have
\begin{align}
    \sum_{j=1}^k \norm{A_j}_2^2 &= \norm*{P -\tfrac{1}{d}I}_2^2 = \tau\left(\left( P -\tfrac{1}{d}I\right)^2\right)\\ &= \tau\left(\frac{d-2}{d}P + \frac{1}{d^2}I\right) = \frac{d-1}{d^2}.
\end{align}
With this in mind, we introduce the length ratios
\begin{equation}
    t_j = \frac{\norm{A_j}_2}{\norm*{P - \frac{1}{d}I}_2} = \frac{d}{\sqrt{d-1}}\norm{A_j}_2
\end{equation}
to describe how $P - \frac{1}{d}I$ is distributed among the $\mathcal{A}_j \ominus \C I$ subspaces, noting that $\sum_{j=1}^k t_j^2 = 1$. If any $t_j = 1$, then $P \in \mathcal{A}_j$, and if any $t_j = 0$, then the subalgebra $\mathcal{A}_j$ is unnecessary and can be omitted. From now on, we therefore assume that each $t_j \in (0, 1)$
and aim to reach a contradiction by examining the spectrum of $A_r$, where $r$ is an index such that $t^2_r$ is at least the average value of $\frac{1}{k}$. Returning to \eqref{eq:decomp} and squaring both sides, we find
\begin{equation}
    \sum_{j,\ell} A_j A_\ell = \frac{d-2}{d} \sum_j A_j - \frac{d-1}{d^2}I.
\end{equation}
Next, we take the inner product of this equation with $X^*$ for some traceless ${X \in \mathcal{A}_r \ominus \C I}$. Since the subalgebras are quasi-orthogonal and closed under multiplication, $\inner{X^*,A_j A_\ell} = \tau(X A_j A_\ell)$ vanishes when precisely two of $r$, $j$, and $\ell$ coincide. After rearrangement, we obtain
\begin{equation}\label{eq:X}
    \sum_{j \neq r} \sum_{\ell \neq r,j} \tau(X A_j A_l) = \frac{d-2}{d} \tau(X A_r) - \tau(X A_r^2).
\end{equation}
We now use this equality to uncover a gap in the spectrum of $A_r$.

\begin{lemma}
Each $\lambda \in \Sp(A_r)$ satisfies $\lambda \leq \lambda_-$ or $\lambda \geq \lambda_+$, where
\begin{equation*}
\lambda_\pm = \frac{1}{2d}\Bigg(d - 2 \pm \sqrt{\big(d-2\big)^2 - 4\big(d-1\big)\big(k - 3 + \tfrac{1}{k}\big)}\Bigg).
\end{equation*}
\end{lemma}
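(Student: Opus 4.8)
The plan is to test the identity \eqref{eq:X} against a carefully chosen traceless element of $\mathcal{A}_r$ and to exploit a positivity hidden in its left-hand side. First I would note that, although the left side of \eqref{eq:X} is a double sum of cross terms, it collapses nicely: writing $B := \bigl(P - \tfrac1d I\bigr) - A_r = \sum_{j\neq r}A_j$, one has $\sum_{j\neq r}\sum_{\ell\neq r,j}A_jA_\ell = B^2 - \sum_{j\neq r}A_j^2$, and since each $A_j^2$ with $j\neq r$ lies in $\mathcal{A}_j$, which is quasi-orthogonal to $\mathcal{A}_r$, any traceless $X\in\mathcal{A}_r$ annihilates $\sum_{j\neq r}A_j^2$ (that is, $\tau(X\sum_{j\neq r}A_j^2)=0$). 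Hence \eqref{eq:X} becomes $\tau(XB^2) = \tfrac{d-2}{d}\tau(XA_r) - \tau(XA_r^2)$.

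Next I would specialize: given $\lambda\in\Sp(A_r)$, pick a minimal projection $E$ of the $d$-dimensional commutative algebra $\mathcal{A}_r$ with $A_rE=\lambda E$ (such an $E$ exists and, being a line of the $r$-th parallel class, has $\tau(E)=\tfrac1d$), and set $X = E - \tfrac1d I$. Using $A_rE=\lambda E$, $\tau(A_r)=0$, quasi-orthogonality, $\tau(EB^2)=\norm{EB}_2^2$ (as $B=B^*$), and the identity $\sum_j\norm{A_j}_2^2=\tfrac{d-1}{d^2}$ established above, a short calculation turns the specialized \eqref{eq:X} into
\begin{equation*}
    \norm{EB}_2^2 = \frac{d-1}{d^3} + \frac{(d-2)\lambda}{d^2} - \frac{\lambda^2}{d},
\end{equation*}
where, pleasantly, the ratio $t_r$ cancels out because $\norm{B}_2^2 + \norm{A_r}_2^2 = \tfrac{d-1}{d^2}$.

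The final step is an upper bound on $\norm{EB}_2$. By the triangle inequality $\norm{EB}_2 = \norm*{\sum_{j\neq r}EA_j}_2 \le \sum_{j\neq r}\norm{EA_j}_2$, while $\norm{EA_j}_2^2 = \tau(EA_j^2) = \tau(E)\tau(A_j^2) = \tfrac1d\norm{A_j}_2^2$ by quasi-orthogonality; Cauchy-Schwarz over the $k-1$ summands together with $\sum_{j\neq r}\norm{A_j}_2^2 = (1-t_r^2)\tfrac{d-1}{d^2}$ and the choice of $r$ (so that $t_r^2\ge\tfrac1k$) then gives $\norm{EB}_2^2 \le \tfrac{(k-1)^2(d-1)}{k\,d^3}$. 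Substituting this into the displayed identity and clearing denominators yields $d^2\lambda^2 - d(d-2)\lambda + (d-1)\bigl(\tfrac{(k-1)^2}{k}-1\bigr) \ge 0$; since $\tfrac{(k-1)^2}{k}-1 = k-3+\tfrac1k$, the left-hand side is $d^2(\lambda-\lambda_-)(\lambda-\lambda_+)$ (and its discriminant is positive throughout the range $2\le k\le\sqrt d$ of interest), which is exactly the claimed dichotomy $\lambda\le\lambda_-$ or $\lambda\ge\lambda_+$.

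The one place requiring care is this first collapsing step: estimating the cross terms $\tau(XA_jA_\ell)$ individually --- say via $\abs{\tau(XA_jA_\ell)} \le \norm{X}_2\norm{A_j}_2\norm{A_\ell}_2$, using the easily checked $\norm{A_jA_\ell}_2 = \norm{A_j}_2\norm{A_\ell}_2$ --- is far too wasteful and produces a bound on $\lambda$ that is vacuous once $k$ is as large as $\sqrt d$. The resolution is to keep the whole cross-term sum packaged as the single nonnegative square $\norm{EB}_2^2$, whose value is pinned down exactly by \eqref{eq:X}; everything after that is routine Hilbert-Schmidt bookkeeping combined with the choice of the ``heavy'' index $r$.
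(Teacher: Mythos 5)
Your proof is correct and follows essentially the same route as the paper's: you test \eqref{eq:X} against the traceless part of the spectral projection $E=Q_s$, use quasi-orthogonality to reduce $\norm{EA_j}_2^2$ to $\tau(E)\tau(A_j^2)$, apply Cauchy--Schwarz over the $k-1$ remaining indices together with $t_r^2\ge \tfrac1k$, and land on the identical quadratic $\lambda^2-\tfrac{d-2}{d}\lambda+\tfrac{d-1}{d^2}(k-3+\tfrac1k)\ge 0$. The only (pleasant, but cosmetic) difference is that you package the cross-term sum together with the diagonal as the single nonnegative quantity $\norm{EB}_2^2$, for which \eqref{eq:X} gives an exact $t_r$-free identity, whereas the paper bounds the off-diagonal sum term by term via $\abs{\tau(QA_jA_\ell Q)}\le\norm{A_jQ}_2\norm{A_\ell Q}_2$; after expanding the square the two estimates coincide.
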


\begin{proof}
Because $\mathcal{A}_r$ is the span of $d$ lines in a parallel class of $\mathcal{N}$, we have that
\begin{equation*}
    A_r = \lambda_1 Q_1 + \dots + \lambda_d Q_d,
\end{equation*}
where each $Q_j = Q_j^* = Q_j^2 \in \mathcal{A}_r$ with $\tau(Q_j) = \frac{1}{d}$ and $Q_j Q_\ell = 0$ for $j \neq \ell$.
This is a spectral decomposition of a self-adjoint operator, so $\Sp(A_r) = \{ \lambda_1, \dots, \lambda_d \} \subset \R$.

Now, fix some $\lambda=\lambda_s$ with corresponding projection $Q=Q_s$. Substituting $Q - \tfrac{1}{d}I$, the traceless part of $Q$, for $X$ in \eqref{eq:X} and noting the quasi-orthogonality of the $\mathcal{A}_j$ subalgebras, we arrive at
\begin{equation}\label{eq:Q}
    \sum_{j \neq r}\sum_{\ell \neq r,j} \tau(Q A_j A_\ell) = \frac{d-2}{d^2} \lambda - \frac{1}{d} \lambda^2 + \frac{d-1}{d^3} t_r^2.
\end{equation}
Applying the triangle inequality gives upper bound
\begin{equation*}
    \sum_{j \neq r}\sum_{\ell \neq r,j} \tau(Q A_j A_\ell) \leq \sum_{j \neq r}\sum_{\ell \neq r,j} \abs{\tau(Q A_j A_\ell)},
\end{equation*}
where
\begin{align*}
    \abs*{\tau(Q A_j A_\ell)} = \abs*{\tau(Q A_j A_\ell Q)} &\leq \tau(Q A_j^2) \tau(Q A_\ell^2)\\
    &= \frac{1}{d^2} \tau(A_j^2) \tau(A_\ell^2) = \frac{(d-1)^2}{d^6} t_j^2 t_\ell^2,
\end{align*}
using $Q^2 = Q$, Cauchy-Schwartz, and quasi-orthogonality. With this inequality, we bound the LHS of \eqref{eq:Q} from above by
\begin{align*}
     \sum_{j \neq r}\sum_{\ell \neq r,j} \tau(Q A_j A_\ell) &\leq \frac{d-1}{d^3} \sum_{j \neq r}\sum_{\ell \neq r,j} t_j t_\ell = \frac{d-1}{d^3}\left(\bigg(\sum_{j \neq r} t_j\bigg)^2 - \sum_{j \neq r} t_j^2\right)\\
     &\leq \frac{d-1}{d^3}\Bigg(\left((k-1)\sqrt{\frac{1-t_r^2}{k-1}}\,\right)^2 - \big(1 - t_r^2\big)\Bigg)\\
     &= \frac{d-1}{d^3}\big(k-2\big)\big(1 - t_r^2\big).
\end{align*}
Applying this bound to the RHS of \eqref{eq:Q} and rearranging gives
\begin{equation*}
    \lambda^2 - \frac{d-2}{d} \lambda + \frac{d-1}{d^2}\big(\big(k-2\big)-\big(k-1\big) t_r^2\big) \geq 0.
\end{equation*}
Finally, we utilize $t_r^2 \geq \tfrac{1}{k}$ and $k \geq 1$ to obtain
\begin{equation}
    \lambda^2 - \frac{d-2}{d} \lambda + \frac{d-1}{d^2}\left(k - 3 +\frac{1}{k}\right) \geq 0.
\end{equation}
The discriminant of this quadratic is non-negative for $k \leq \sqrt{d}$, so we have our desired result.
\end{proof}

Let us first consider the case where there exists some large $\lambda \in \Sp(A_r)$ with $\lambda \geq \lambda_+$. We will reach a contradiction by returning to \eqref{eq:X} and substituting $A_r$ for $X$, which gives
\begin{equation}\label{eq:Ar}
    \sum_{j \neq r} \sum_{\ell \neq r,j} \tau(A_r A_j A_\ell) = \frac{d-2}{d} \tau(A_r^2) - \tau(A_r^3).
\end{equation}
To bound \eqref{eq:Ar} from below, we find a theoretical maximum for $\tau(A_r^3)$ subject to $A_r = A_r^*$, $\tau(A_r) = 0$, and $\tau(A_r^2) = \frac{d-1}{d^2} t_r^2$ with fixed $t_r$. The method of Lagrange multipliers reveals that this maximum is achieved by an extreme
Click to hide the PDF
 spectrum for $A_r$ with one large, positive eigenvalue of $\tfrac{d-1}{d} t_r$ and many small, negative eigenvalues equal to $-\tfrac{1}{d} t_r$. Straightforward computation then provides the bound 
\begin{equation}\label{eq:ArLowerBound}
    \sum_{j \neq r} \sum_{\ell \neq r,j} \tau(A_r A_j A_\ell) \geq \frac{(d-2)(d-1)}{d^3}(t_r^2 - t_r^3).
\end{equation}
We note that this estimate is suitable to our current situation because the large eigenvalue of $A_r$ forces a similarly extreme spectrum. Returning to the LHS of \eqref{eq:Ar}, we note that
\begin{equation*}
    \abs[\Big]{\sum_{\ell \neq r,j} \tau\left(A_r A_j A_\ell\right)}^2
    \leq \norm{A_j A_r}_2^2 \cdot \norm[\Big]{\sum_{\ell \neq r,j} A_\ell}_2^2,
\end{equation*}
where
\begin{equation*}
    \norm{A_j A_r}_2^2 = \tau(A_j^2A_r^2) = \tau(A_j^2)\tau(A_r^2) = \frac{(d-1)^2}{d^4}t_j^2 t_r^2
\end{equation*}
and
\begin{equation*}
    \norm[\Big]{\sum_{\ell \neq r,j} A_\ell}_2^2 = \sum_{\ell \neq r,j} \norm{A_\ell}_2^2 = \frac{d-1}{d^2}(1 - t_j^2 - t_r^2).
\end{equation*}
Putting this all together, we bound \eqref{eq:Ar} from above by
\begin{align}\label{eq:ArUpperBound}
\begin{split}
    \sum_{j \neq r} \sum_{\ell \neq r,j} \tau(A_r A_j A_\ell) &\leq \sum_{j \neq r} \abs[\Big]{\sum_{\ell \neq r,j} \tau(A_r A_j A_\ell)}\\
    &\leq \frac{(d-1)^{3/2}}{d^3}t_r \sum_{j \neq r} t_j \sqrt{1 - t_r^2 - t_j^2}\\
    &\leq \frac{\sqrt{(k-2)(d-1)^3}}{d^3} t_r (1 - t_r^2).
\end{split}
\end{align}
The final inequality is obtained by fixing $t_r$ and using Lagrange multipliers to see that the sum is maximized when all summands are equal, with $t_j^2 = \frac{1}{k-1}(1-t_r^2)$ for $j \neq r$. Combining this with \eqref{eq:ArLowerBound}, we find
\begin{equation*}
    \frac{\sqrt{(k-2)(d-1)^3}}{d^3} t_r (1 - t_r^2) \geq \frac{(d-2)(d-1)}{d^3}(t_r^2 - t_r^3).
\end{equation*}
Since $t_r \in (0, 1)$, we can divide by $t_r,1-t_r$ and $1+t_r$ to obtain
\begin{equation}\label{eq:tr}
    \frac{\sqrt{(d-1)(k-2)}}{d-2}  \geq \frac{t_r}{1+t_r}.
\end{equation}
Next, we minimize $t_r$ subject to the existence of our single large eigenvalue. As before, we find the minimum to be achieved by an extreme spectrum for $A_r$, with the other eigenvalues small and negative. This gives
\begin{align*}
t_r &= \sqrt{\frac{d}{d-1}\sum_{j=1}^d\!\lambda_j^2} \geq \sqrt{\frac{d}{d-1}\left(\lambda_+^2 + (d-1)\left(\frac{-\lambda_+}{d-1}\right)^2\right)} = \frac{d\lambda_+}{d-1}.
\end{align*}
Lastly, noting that the function $t \mapsto t/(1+t)$ is monotonically increasing on $(0,1)$, our last two inequalities imply
\begin{equation*}
    \frac{\sqrt{(d-1)(k-2)}}{d-2}
    \geq \frac{\gamma}{1+\gamma},
\;\;\;\;
\mathrm{where}\;\;
\gamma=\frac{d \lambda_+}{d-1}.
\end{equation*}
However, as we prove explicitly in Lemma \ref{lem:inequality1} (see Appendix), this inequality cannot hold unless $k > \sqrt{d}$. Indeed, this is clear in the limit; for the extreme case of $k = \sqrt{d}$, $\lambda_+$ and $t_r$ approach 1 as $d \to \infty$, and the inequality nears $\frac{1}{2} \leq d^{-1/4}$. Hence, we have reached a contradiction.

This leaves us with the final possibility that each $\lambda \in \Sp(A_r)$ satisfies $\lambda \leq \lambda_-$. We return to \eqref{eq:Ar}, but this time, restricted to only small eigenvalues, we observe that
\begin{equation}
\label{eq:nonopt}
    \tau(A_r^3) \leq \lambda_{\max} \tau(A_r^2) \leq \lambda_{-} \tau(A_r^2)
\end{equation}
and bound the RHS of \eqref{eq:Ar} from below by
\begin{equation*}
\frac{d-2}{d} \tau(A_r^2) - \tau(A_r^3) \geq \frac{d-1}{d^2}t_r^2\left(\frac{d-2}{d} - \lambda_-\right).
\end{equation*}
Now, we combine this inequality with our previous upper bound given by \eqref{eq:ArUpperBound} to find
\begin{equation*}
     \frac{\sqrt{(k-2)(d-1)^3}}{d^3} t_r (1 - t_r^2) \geq 
     \frac{d-1}{d^2}t_r^2\left(\frac{d-2}{d} - \lambda_-\right),
\end{equation*}
which simplifies to
\begin{equation*}
\frac{\sqrt{(d-1)(k-2)}}{d-2-d\lambda_-} \geq \frac{t_r}{1-t_r^2} .
\end{equation*}
Finally, since the function $t \mapsto t/(1-t^2)$ is monotonically increasing on $(0,1)$, and $t_r$ was chosen to be at least $\tfrac{1}{\sqrt{k}}$, we can deduce with some rearrangement that
\begin{equation}
    \frac{\sqrt{k-2}(k-1)}{\sqrt{k}} \geq \frac{d-2-d\lambda_-}{\sqrt{d-1}}.
\end{equation}
Once again, this inequality can only be satisfied if $k > \sqrt{d}$, as we compute explicitly in Lemma \ref{lem:inequality2} (see Appendix). Thus, we have reached a final contradiction, and the theorem is complete.
\end{proof}
From here, the corollary is straightforward.
\begin{proof}[Proof of Corollary \ref{cor:completionUniqueness}]
Let $\mathcal{N}$ be a $(d+1-k)$-net of order $d$ over $\mathfrak{A}$ with $k \leq \sqrt{d}$, and denote the subalgebras spanned by its parallel classes by $\mathcal{B}_1, \dots, \mathcal{B}_{d+1-k}$. If $P$ is an orthogonal projection satisfying the trace requirement appearing in the claim, then  $P-\frac{1}{d}I$ is orthogonal to each subalgebra $\mathcal{B}_j$. Equivalently, $P$ must be an orthogonal projection of ``size'' $\tau(P)=\frac{1}{d}$ lying in the subspace sum
\begin{equation*}
    \mathcal{X} = \C I + \Big(\sum_j\mathcal{B}_j\Big)^\perp,
\end{equation*}
which has dimension $d^2 - (d-1)(d+1-k) = k(d-1) + 1$. If $\mathcal{N}$ can be completed to a full $(d+1)$-net $\widetilde{\mathcal{N}}$, then elements of the new parallel classes $\mathcal{M} = \widetilde{\mathcal{N}} \setminus \mathcal{N}$ form a $k$-net of order $d$ over $\mathfrak{A}$. Labeling the corresponding subalgebras of $\mathfrak{A}$ by $\mathcal{A}_1, \dots, \mathcal{A}_k$, the subspace sum $\sum_j \mathcal{A}_j$ (equivalently, the span of the elements of $\mathcal{M}$) must be contained in $\mathcal{X}$. However, for Proposition \ref{prop:kNetBound}, we showed that this subspace has dimension $k(d-1) + 1$, so it must equal $\mathcal{X}$. Therefore, any orthogonal projection $P \in \mathfrak{A}$ satisfying the trace condition of the claim  must lie in the span of the lines of $\mathcal{M}$, so Theorem \ref{thm:rigidity} gives that $P \in \mathcal{M}$.
\end{proof}

\begin{remark}
\label{betterboundremark}
The bound given on $k$ in our main
theorem could be slightly improved.
For example, \eqref{eq:nonopt} was nice and simple to use, but it is clearly suboptimal; since $\tau(A_r) = 0$, it must have some negative eigenvalues, so $\tau(A_r)^3$ must be strictly smaller than $\lambda_{\max}\tau(A_r^2)$. 
Also, when $k=2$, the LHS of \eqref{lem:inequality2} is precisely zero, immediately showing that for $k=2,d\neq 2$ the projection $P$ must be an element of $\mathcal{N}$.
Taking account of these and the integrality of $k$, we actually numerically justified with a computer that at least up to $d=1000$, the
condition $k \leq \sqrt{d}$  could be replaced by $k< \sqrt{d}+1$ with the exception of six orders (all between $10$ and $19$) needing
a further case-by-case study, something 
clearly unproportional to the
possible gain.
\end{remark}

\section{Nice mutually unbiased bases and uncompletability}\label{sec:niceMUBs}

We have just demonstrated that sufficiently large sets of MUBs have at most one complete extension. Now, it is natural to examine the structure of this unique completion if we place certain conditions on the initial set. Hence, we next restrict ourselves to a subset of so-called nice MUBs, which have a convenient algebraic description. With these objects, we will prove a stricter rigidity theorem that implies certain sets of MUBs cannot be completed.

\subsection{Nice MUBs}
In quantum information theory, orthonormal bases of unitary matrices, called \textit{unitary error bases}, are fundamental to error correction and super-dense coding \cite{klappenecker03}. These unitaries are often constructed algebraically, motivating the following definition \cite{klappenecker05}.

\begin{definition}
Let $G$ a group of order $d^2$ with identity $e$. A \textit{nice error basis}, also called a unitary operator basis of group type, is a set $\mathcal{E} = \{ U(g) \mid g \in G \}$ of unitary operators in $M_d(\C)$ such that
\begin{enumerate}[(i)]
    \item $U(e) = I$,
    \item $\Tr(U(g)) = 0$ for $e \neq g \in G$,
    \item $U(g)U(h) = \lambda(g,h)U(gh)$ for all $g,h \in G$,
\end{enumerate}
where $\lambda(g,h)$ is a complex phase factor.
\end{definition}

By (i) and (iii), $U$ determines a projective representation of $G$. Evaluating (iii) at $g$ and $g^{-1}$ gives that $U(g)^* = \lambda(g^{-1},g)^{-1}U(g^{-1})$. From this and (ii), we find that $\Tr(U(g)^*U(h)) = 0$ for all $g \neq h$. Thus, as the name suggests, nice error bases are a special class of unitary error bases. In practice, many nice error bases are some variant of the following, described in \cite{bandyopadhyay02}.

\begin{example}\label{ex:discreteWeyl}
Fix $d \geq 2$, and let $\omega = e^{2\pi i / d}$. We define $X_d$ to be the cyclic shift matrix $X_d \, \mathbf{e}_j = \mathbf{e}_{j+1} \pmod{d}$ and $Z_d$ to be the diagonal matrix $Z_d \, \mathbf{e}_j = \omega^{j-1} \mathbf{e}_j$, where $\{ \mathbf{e}_j \}_j$ is the standard basis. Then, the \textit{discrete Weyl operators} $\{ X_d^jZ_d^\ell \mid (j,\ell) \in \Z_d^2 \}$ form a nice error basis with index group $\Z_d^2$.
\end{example}

The group $G$ is called the \textit{index group} of $\mathcal{E}$. For each subgroup $H$ of the index group, define
\begin{equation*}
\mathcal{A}_H := \vspan\{ U(h) \mid h \in H \}
\end{equation*}
to be the subspace of $M_d(\C)$ spanned by the unitaries corresponding to $H$. By (iii), $\mathcal{A}_H$ is actually a *-subalgebra, closed under operator multiplication and taking adjoints. If $H$ is of order $d$ and its associated unitaries are pairwise commuting, then $\mathcal{A}_H$ is a MASA of $M_d(\C)$, corresponding to an orthonormal basis of $\C^d$.

If two subgroups $H,H' \leq G$ have trivial intersection $H \cap H' = \{e\}$, then the operator orthogonality of the unitaries in $\mathcal{E}$ implies that $\mathcal{A}_H$ and $\mathcal{A}_{H'}$ are quasi-orthogonal subalgebras. Putting this all together, we have the following well-known result \cite{aschbacher07}.

\begin{proposition}\label{prop:niceMUBs}
Let $\mathcal{E}$ be a nice error basis for $M_d(\C)$ with index group $G$, and take $H_1, \dots, H_m$ to be subgroups of $G$ of order $d$ with pairwise trivial intersections. If, for each $H_j$, the associated unitaries of $\mathcal{E}$ are pairwise commuting, then $\mathcal{A}_{H_1}, \dots, \mathcal{A}_{H_m}$ are quasi-orthogonal MASAs of $M_d(\C)$, corresponding to a set of MUBs.
\end{proposition}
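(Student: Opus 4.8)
\emph{Proof proposal.} The statement bundles three claims: that each $\mathcal{A}_{H_j}$ is a MASA of $M_d(\C)$, that $\mathcal{A}_{H_i}$ and $\mathcal{A}_{H_j}$ are quasi-orthogonal for $i \neq j$, and that a family of quasi-orthogonal MASAs yields a collection of MUBs. The third is precisely the content of Section~\ref{sec:framework}: each MASA of the form $\mathcal{A}_{H_j}$ is the diagonal algebra $\mathcal{A}_{\mathcal{E}_j}$ of some orthonormal basis $\mathcal{E}_j$, and quasi-orthogonality of the $\mathcal{A}_{H_j}$ translates into mutual unbiasedness of the $\mathcal{E}_j$. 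So the plan is to prove the first two claims and then invoke that dictionary.

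First I would fix an index $j$ and check that $\mathcal{A}_{H_j} = \vspan\{U(h) \mid h \in H_j\}$ is a MASA. That it is a unital $*$-subalgebra is already observed in the text preceding the proposition: closure under multiplication follows from~(iii) since $H_j$ is a subgroup, closure under adjoints from $U(h)^* = \lambda(h^{-1},h)^{-1} U(h^{-1})$ with $h^{-1} \in H_j$, and $I = U(e) \in \mathcal{A}_{H_j}$. The key extra input is the dimension count: the orthogonality relations $\Tr(U(g)^* U(h)) = 0$ for $g \neq h$ together with $\Tr(U(h)^* U(h)) = \Tr(I) = d$ show the $|H_j| = d$ unitaries $\{U(h)\}_{h \in H_j}$ are pairwise Hilbert--Schmidt orthogonal, hence linearly independent, so $\dim \mathcal{A}_{H_j} = d$. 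Now the pairwise-commutativity hypothesis makes $\mathcal{A}_{H_j}$ a commutative unital $*$-subalgebra; being spanned by commuting self-adjoints it is simultaneously diagonalizable in some orthonormal basis, hence sits inside the $d$-dimensional algebra of matrices diagonal in that basis, and having dimension $d$ forces it to be all of that algebra, i.e.\ a MASA $\mathcal{A}_{\mathcal{E}_j}$.

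Next I would verify quasi-orthogonality for $i \neq j$. By~(ii) the unitaries $U(h)$ with $e \neq h \in H_j$ are traceless, and since adjoining $U(e) = I$ recovers a spanning set for $\mathcal{A}_{H_j}$, the set $\{U(h) \mid e \neq h \in H_j\}$ spans $\mathcal{A}_{H_j} \ominus \C I$. It then suffices to observe that the analogous spanning sets for $\mathcal{A}_{H_i} \ominus \C I$ and $\mathcal{A}_{H_j} \ominus \C I$ are mutually orthogonal: the index sets $H_i \setminus \{e\}$ and $H_j \setminus \{e\}$ are disjoint because $H_i \cap H_j = \{e\}$, and the relation $\Tr(U(g)^* U(h)) = d\,\delta_{g,h}$ then gives the desired orthogonality. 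Thus $\mathcal{A}_{H_i}$ and $\mathcal{A}_{H_j}$ have orthogonal traceless parts, i.e.\ they are quasi-orthogonal, and the MASA/MUB correspondence of Section~\ref{sec:framework} finishes the proof.

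I do not expect a genuine obstacle, since the algebraic facts about nice error bases and the MASA/MUB dictionary are all already available; the only step demanding a little care is the bookkeeping in the second paragraph --- confirming that the $d$ commuting unitaries are genuinely linearly independent so that $\mathcal{A}_{H_j}$ has dimension exactly $d$, and that a $d$-dimensional commutative unital $*$-subalgebra of $M_d(\C)$ has no room left to be anything but maximal abelian.
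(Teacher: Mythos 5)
Your proposal is correct and follows essentially the same route the paper takes: the paper presents this proposition as a well-known consequence of the discussion immediately preceding it (closure of $\mathcal{A}_{H_j}$ under products and adjoints via (iii), the trace-orthogonality relations $\Tr(U(g)^*U(h))=0$ forcing $\dim\mathcal{A}_{H_j}=d$ and quasi-orthogonality from $H_i\cap H_j=\{e\}$, and the MASA/MUB dictionary of Section~\ref{sec:framework}), and you have simply filled in the same steps with explicit bookkeeping. The only cosmetic point is that the spanning unitaries $U(h)$ are not themselves self-adjoint; one either diagonalizes them as commuting normal operators or passes to the self-adjoint parts $U(h)+U(h)^*$ and $i(U(h)-U(h)^*)$ as the paper does earlier, but this does not affect the argument.
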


We call bases constructed from a nice error basis $\mathcal{E}$ in this way $\mathcal{E}$-\textit{nice mutually unbiased bases}. Nice MUBs are well-studied for their convenient algebraic structure, although they are insufficient to resolve the prime power conjecture for MUBs in general \cite{aschbacher07}. Our second theorem concerns nice MUBs with abelian index groups.

If $\mathcal{E} = \{ U(g) \mid g \in G\}$ is a nice error basis with abelian index group $(G,+)$, then
\begin{equation*}
U(g)U(h) = \lambda(g,h)U(g + h) = \lambda(g,h)\lambda(h,g)^{-1}U(h)U(g).
\end{equation*}
Hence, we define the \textit{commutator map} $\sigma:G \times G \to \C$ of $\mathcal{E}$ by
\begin{equation*}
\sigma(g,h) = \lambda(g,h)\lambda(h,g)^{-1},
\end{equation*}
so that $U(g)$ and $U(h)$ commute if and only if $\sigma(g,h) = 1$. For a given subgroup $H \leq G$, the corresponding the subalgebra $\mathcal{A}_H$ is abelian exactly when $\sigma|_{H \times H} \equiv 1$. This allows us to express Proposition \ref{prop:niceMUBs} more concretely when $G$ is abelian.

\begin{proposition}
Let $\mathcal{E}$ be a nice error basis for $M_d(\C)$ with abelian index group $G$ and commutator map $\sigma$. If $H_1, \dots, H_m$ are order $d$ subgroups of $G$ with pairwise trivial intersections such that $\sigma|_{H_j \times H_j} \equiv 1$ for each $H_j$, then $\mathcal{A}_{H_1}, \dots, \mathcal{A}_{H_m}$ are quasi-orthogonal MASAs of $M_d(\C)$, corresponding to a set of MUBs.
\end{proposition}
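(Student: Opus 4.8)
The plan is to recognize that this statement is nothing more than the abelian specialization of Proposition \ref{prop:niceMUBs}, obtained by unwinding the definition of the commutator map. As was just observed, $\sigma(g,h) = \lambda(g,h)\lambda(h,g)^{-1}$ satisfies $U(g)U(h) = \sigma(g,h)\,U(h)U(g)$, so $U(g)$ and $U(h)$ commute if and only if $\sigma(g,h) = 1$. Hence, for any subgroup $H \leq G$, the condition $\sigma|_{H \times H} \equiv 1$ is precisely the assertion that the unitaries $\{ U(h) \mid h \in H \}$ pairwise commute.

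With this dictionary in hand, the hypotheses of the present proposition --- that $H_1, \dots, H_m \leq G$ have order $d$, pairwise trivial intersections, and satisfy $\sigma|_{H_j \times H_j} \equiv 1$ for each $j$ --- are word-for-word the hypotheses of Proposition \ref{prop:niceMUBs} once the last clause is rephrased as ``the associated unitaries of $\mathcal{E}$ are pairwise commuting.'' Applying Proposition \ref{prop:niceMUBs} then immediately yields that $\mathcal{A}_{H_1}, \dots, \mathcal{A}_{H_m}$ are quasi-orthogonal MASAs of $M_d(\C)$ corresponding to a collection of MUBs.

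There is essentially no obstacle here: the argument is a one-line appeal to the earlier proposition, and the only minor point worth recording is why commutativity of the spanning unitaries forces $\mathcal{A}_H$ to be a MASA. This holds because $\mathcal{A}_H$ is then an abelian, unital (it contains $U(e) = I$), $d$-dimensional *-subalgebra --- the $d$ unitaries $U(h)$ being linearly independent since $\Tr(U(h)^*U(h')) = 0$ for $h \neq h'$ --- and an abelian *-subalgebra of $M_d(\C)$ of dimension $d$ is necessarily maximal abelian. Since all of this was already spelled out in the discussion preceding Proposition \ref{prop:niceMUBs}, the proposition may reasonably be stated without a separate proof, or proved by the single sentence ``apply Proposition \ref{prop:niceMUBs}, noting that $\sigma|_{H_j \times H_j} \equiv 1$ is equivalent to the pairwise commutativity of $\{U(h) \mid h \in H_j\}$.''
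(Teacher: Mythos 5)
Your proposal is correct and matches the paper's own (implicit) argument exactly: the paper presents this proposition as the direct specialization of Proposition \ref{prop:niceMUBs} to abelian index groups, using precisely the observation that $\sigma|_{H \times H} \equiv 1$ is equivalent to the pairwise commutativity of $\{U(h) \mid h \in H\}$. No further comment is needed.
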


We can now prove that the bound of $k \leq \sqrt{d}$ in Theorem \ref{thm:rigidity} is tight when $d$ is the square of an odd prime. The following construction was inspired by a similar one appearing in the work of Sz\'{a}nt\'{o} \cite{szanto16}.

\begin{example}\label{ex:discreteWeylPrimeSquare}
Consider the unitaries of the form
\begin{equation*}
    U(j,\ell,r,s) = X_d^jZ_d^\ell \otimes X_d^rZ_d^s,
\end{equation*}
where $X_d$ and $Z_d$ are the discrete Weyl operators introduced in Example \ref{ex:discreteWeyl}. The collection $\mathcal{E} = \{ U(j,\ell,r,s) \mid j,\ell,r,s \in \Z_d \}$ is a nice error basis for $M_d(\C) \otimes M_d(\C) \cong M_{d^2}(\C)$ with abelian index group $\Z_d^4$. Taking $d=p$ to be an odd prime and $D \in \Z_p$ to be a quadratic nonresidue, the subgroups
\begin{align*}
    R_{x,y} &= \inner{(1,x,y,0), (0,-y,-Dx,1)}, \quad x,y \in \Z_p\\
    R_\infty &= \inner{(0,1,0,0),(0,0,1,0)}
\end{align*}
of $\Z_p^4$ generate a complete set of $p^2 + 1$ $\mathcal{E}$-nice MUBs. Furthermore, the MASA of $M_{p^2}(\C)$ corresponding to
\begin{equation*}
    S = \inner{(1,0,0,0),(0,0,1,0)}
\end{equation*}
is contained in the subspace sum of the $p+1$ MASAs $\mathcal{A}_{R_\infty}$ and $\mathcal{A}_{R_{0,y}}$, for $y \in \Z_p$, but is not equal to any of them, providing a tight counterexample\footnote{It is actually not difficult to give a tight counterexample with $p=2$, too -- but not with the above construction, since there is no quadratic nonresidue in $\Z_2$. We shall not go into the details here as we do not use this, but by \cite{pszw} one 
sees that the orthogonal of $\mathcal B + \mathcal B'\ominus \C I$
in $M_4(\mathbb C)\equiv M_2(\mathbb C)\otimes M_2(\mathbb C)$, where
$\mathcal B = M_2(\mathbb C)\otimes I$ and $\mathcal B'=I\otimes M_2(\mathbb C)$, can be actually decomposed into the sum of 3 quasi-orthogonal MASAs  in continuously many different ways.} for Theorem \ref{thm:rigidity} with $k = \sqrt{d} + 1$, $d=p^2$. 
\end{example}

\begin{proof}
This tensor product construction for $\mathcal{E}$ is well-known \cite{bandyopadhyay02}, with the commutator map $\sigma:\Z_p^4 \times \Z_p^4 \to \C$ given by
\begin{equation*}
    \sigma(u,v) = \omega^{u_1v_2 - u_2v_1 + u_3v_4 - u_4v_3},
\end{equation*}
where $\omega = e^{2\pi i / p}$. It is straightforward to check that $\sigma \equiv 1$ when restricted to each subgroup. Clearly, $R_\infty$ has order $p^2$ and intersects the other subgroups trivially. If
\begin{equation*}
    \alpha (1,x_1,y_1,0) + \beta (0, -y_1, -Dx_1, 1) = \tilde{\alpha} (1,x_2,y_2,0) + \tilde{\beta} (0, -y_2, -Dx_2, 1),
\end{equation*}
then $\alpha = \tilde{\alpha}$ and $\beta = \tilde{\beta}$, so each $R_{x,y}$ subgroup also has order $p^2$. Reformulating, we find that
\begin{equation*}
    \begin{pmatrix}
    x_1 - x_2 & y_2 - y_1 \\
    y_1 - y_2 & D(x_2 - x_1)
    \end{pmatrix}
    \begin{pmatrix}
    \alpha\\\beta
    \end{pmatrix} = 0.
\end{equation*}
Since $D$ is not a quadratic residue, this matrix is non-singular unless $x_1 = x_2$ and $y_1 = y_2$, forcing $\alpha = \beta = 0$. Thus, the $R_{x,y}$ subgroups have pairwise trivial intersections. All together, this gives that the $R_{x,y}$ subgroups along with $R_\infty$ indeed generate a set of $p^2 + 1$ $\mathcal{E}$-nice MUBs.

Finally, it is easy to verify that $S$ is a distinct subgroup of order $p^2$, and it is contained in the union of $R_\infty$ and each $R_{0,y}$. Thus, $\mathcal{A}_S$ is a MASA of $M_{p^2}(\C)$ contained in the subspace sum of $\mathcal{A}_{R_\infty}$ and each $\mathcal{A}_{R_{0,y}}$.
\end{proof}

\begin{remark}
We note that the definitions of this subsection can be extended naturally to the commutative case, but we omit a discussion of ``nice $k$-nets'' because our next results are fundamentally tied to the non-commutative nature of $M_d(\C)$.
\end{remark}

\subsection{A nice rigidity theorem}
The main observation from which we develop our second theorem is that an abelian index group of a nice error basis acts via the conjugation action in a well-behaved way on many subalgebras of interest. We formalize this with the following lemma.

\begin{lemma}\label{le1}
Let $\mathcal{E} = \{U(g) \mid g \in G\}$ be a nice error basis for $M_d(\C)$ with abelian index group. Suppose that $\mathcal{C}$ is a collection of at least $d+1-\sqrt{d}$ $\mathcal{E}$-nice MUBs that can be completed. If $H$ is a subgroup of $G$ corresponding to a basis of $\mathcal{C}$ and $\mathcal{B}$ is a subalgebra of $M_d(\C)$ corresponding to a basis of $\mathcal{C}$'s completion, then $H$ acts transitively on the rank one orthoprojections of $\mathcal{B}$ via the conjugation action defined by $Q \xmapsto{a} U(a)QU(a)^*$ for $a \in H$.
\end{lemma}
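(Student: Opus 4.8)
The plan is to show that the conjugation action of $H$ on the rank-one orthoprojections of $\mathcal{B}$ is \emph{well-defined} (i.e.\ preserves the set of such projections) and \emph{transitive}. Well-definedness is easy: for $a \in H$, the map $Q \mapsto U(a) Q U(a)^*$ is a unitary conjugation, hence preserves the property of being a rank-one orthoprojection; what needs checking is that it maps $\mathcal{B}$ to itself. For this I would first argue that $H$, acting by conjugation, preserves the whole completion $\widetilde{\mathcal{N}}$ of $\mathcal{C}$ (viewed as a $k$-net over $M_d(\C)$). Indeed, conjugation by $U(a)$ is a $*$-automorphism of $M_d(\C)$ fixing the trace, so it sends quasi-orthogonal MASAs to quasi-orthogonal MASAs. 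Moreover it fixes each $\mathcal{E}$-nice MASA in $\mathcal{C}$: if $H'$ is the subgroup for such a MASA, then $U(a) U(h') U(a)^* = \sigma(a,h')\, U(h')$ for $h' \in H'$ (using that $G$ is abelian and the definition of the commutator map $\sigma$), so $U(a) \mathcal{A}_{H'} U(a)^* = \mathcal{A}_{H'}$; in particular this applies to $H$ itself and to all the other bases of $\mathcal{C}$. Therefore conjugation by $U(a)$ carries the $(d+1-k)$-net $\mathcal{C}$ to itself, and hence — since $k \le \sqrt{d}$ lets us invoke Corollary~\ref{cor:completionUniqueness}, the uniqueness of the completion — it must carry the \emph{completion} $\widetilde{\mathcal{N}}$ to itself as well, permuting the $k$ extra parallel classes. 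Because $U(a) \mathcal{A}_H U(a)^* = \mathcal{A}_H$ and the $\mathcal{A}_{H'}$ for $H' \in \mathcal{C}$ are also fixed, the induced permutation on parallel classes fixes all classes coming from $\mathcal{C}$; for the extra classes I would like to say each is fixed individually, but at the very least the orbit structure is under control. For the statement we only need the extra basis $\mathcal{B}$ in question; I would show $U(a)\mathcal{B}U(a)^*$ is again an extra MASA of $\widetilde{\mathcal{N}}$ and then pin it down to $\mathcal{B}$ by a counting/intersection argument (the full set of extra MASAs is determined, and conjugation by $U(a)$ must act on this finite set; for transitivity on $\mathcal{B}$ itself, what I really need is only that $\mathcal{B}$ lies in an $H$-orbit of MASAs of manageable size).

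Granting that $\mathcal{B}$ is conjugation-invariant under $H$, transitivity on its rank-one projections is the heart of the matter. The set $\mathcal{P}_{\mathcal{B}}$ of rank-one orthoprojections in $\mathcal{B}$ has exactly $d$ elements (the minimal projections of a MASA of $M_d(\C)$), and $H$ acts on it. A finite abelian group $H$ of order $d^2$ acting on a $d$-element set: I want the action to factor through a transitive quotient. The key input is that conjugation by $U(a)$ is trivial on $\mathcal{B}$ exactly when $U(a)$ commutes with every element of $\mathcal{B}$, i.e.\ when $U(a) \in \mathcal{B}' = \mathcal{B}$ (as $\mathcal{B}$ is maximal abelian, its commutant is itself); so the stabilizer of the $H$-action on $\mathcal{P}_{\mathcal{B}}$ — in fact the pointwise stabilizer — is $\{a \in H : U(a) \in \mathcal{B}\}$. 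If $\mathcal{B}$ were itself $\mathcal{E}$-nice, say $\mathcal{B} = \mathcal{A}_K$, this would be $H \cap K = \{e\}$ (pairwise trivial intersections), giving a free, hence transitive-on-orbits action with orbits of size $d$, i.e.\ transitivity. In general $\mathcal{B}$ need not be nice; here is where I expect to use the hypothesis $|\mathcal{C}| \ge d+1-\sqrt{d}$ more seriously, together with the rigidity theorem. Concretely: the minimal projections of $\mathcal{B}$ all lie in $\operatorname{Span}(\widetilde{\mathcal{N}} \setminus \mathcal{C}\text{-part})$; under conjugation by $H$ their images are again rank-one projections of trace $1/d$ lying in the same span, so by Theorem~\ref{thm:rigidity} (applied to the extra $k$-net) these images are again lines of the completion, and invariance of $\mathcal{B}$ forces them into $\mathcal{B}$. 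So the $H$-orbits partition $\mathcal{P}_{\mathcal{B}}$; I then need to rule out orbits of size $< d$. Size of an orbit is $[H : \operatorname{Stab}(Q)]$ where $\operatorname{Stab}(Q) = \{a \in H : U(a) Q U(a)^* = Q\}$; since the projections of $\mathcal{B}$ span $\mathcal{B}$, a single $Q$'s stabilizer could be larger than the pointwise one, but the relation $U(a) Q U(a)^* = \sigma$-twist forces $Q$ to lie in an eigenspace decomposition with respect to the $U(a)$, and a dimension count shows a proper nontrivial invariant subalgebra would contradict either maximality of $\mathcal{B}$ or the quasi-orthogonality constraints sharpened by the large number of bases in $\mathcal{C}$.

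The main obstacle I anticipate is precisely this last point: controlling the $H$-action on $\mathcal{B}$ when $\mathcal{B}$ is an \emph{arbitrary} (not assumed nice) MASA of the completion. The clean case is when $\mathcal{B}$ is $\mathcal{E}$-nice, where everything reduces to trivial-intersection of subgroups and the commutator map; the real work is leveraging rigidity (Theorem~\ref{thm:rigidity}, hence the hypothesis $|\mathcal{C}| \ge d+1-\sqrt{d}$, equivalently $k \le \sqrt{d}$ for the complementary net) to force $\mathcal{B}$ to be conjugation-invariant and then to force the orbits to be full-sized. I would organize the final argument as: (1) conjugation by $U(a)$ permutes the lines of $\widetilde{\mathcal{N}}$, fixing every line of $\mathcal{C}$'s parallel classes and every parallel class of $\mathcal{C}$; (2) hence it permutes the extra parallel classes, and because $U(a)\mathcal{A}_HU(a)^* = \mathcal{A}_H$ together with uniqueness of completion (Corollary~\ref{cor:completionUniqueness}) pins down which class goes where; (3) in particular $\mathcal{B}$'s parallel class is invariant, so $U(a)$ permutes $\mathcal{P}_{\mathcal{B}}$; (4) the pointwise kernel of this permutation action is $\{a : U(a) \in \mathcal{B}\}$, a subgroup of index $d$ in $H$ (it is a maximal abelian subgroup of the index group, of order $d$, since $\mathcal{B}$ is a MASA — here one checks $\{U(a) : a \in H,\ U(a) \in \mathcal{B}\}$ spans $\mathcal{B}$ using that $\mathcal{B}$ is spanned by rank-one projections each an $H$-average of a single $U(g)$); (5) therefore the $H$-action on $\mathcal{P}_{\mathcal{B}}$ factors through a group of order $d$ acting on a $d$-set with trivial stabilizers, hence is the regular action, hence transitive.
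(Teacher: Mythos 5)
Your outline has the right skeleton (well-definedness via invariance of the $\mathcal{A}_{H'}$ and uniqueness of the completion, then a stabilizer analysis for transitivity), but it is missing the single computation that the whole proof hinges on, and the two places where you defer to a ``counting/intersection argument'' and an ``eigenspace decomposition / dimension count'' are exactly where that computation is needed. The key observation is: for a rank-one orthoprojection $Q \in \mathcal{B}$ and $e \neq a \in H$, the operator $QU(a)^*Q$ is a scalar multiple of $Q$ (because $Q$ has rank one), so
\begin{equation*}
\Tr\bigl(U(a)QU(a)^*\,Q\bigr) \;=\; \Tr\bigl(U(a)\,(QU(a)^*Q)\bigr) \;\propto\; \Tr\bigl(U(a)Q\bigr) \;=\; 0,
\end{equation*}
the last equality by quasi-orthogonality of $\mathcal{A}_H$ and $\mathcal{B}$ together with $\Tr(U(a))=0$. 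Thus $U(a)QU(a)^*$ is Hilbert--Schmidt orthogonal to $Q$. This resolves both of your gaps at once: (i) since $U(a)QU(a)^*$ lies in some completion MASA $\mathcal{B}_{\ell'}$ (by your invariance-plus-uniqueness argument, which matches the paper), and unbiasedness would force $\Tr(U(a)QU(a)^*\,Q)=1/d\neq 0$ if $\ell'\neq\ell$, we get $U(a)\mathcal{B}U(a)^*=\mathcal{B}$ --- no orbit-of-MASAs bookkeeping is needed; (ii) $U(a)QU(a)^*\neq Q$, so the action of $H$ on the $d$ minimal projections of $\mathcal{B}$ is fixed-point-free, all point stabilizers are trivial, and orbit--stabilizer with $|H|=d$ gives transitivity.

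Two further corrections. First, $H$ has order $d$, not $d^2$ (it is one of the order-$d$ subgroups of the index group $G$, $|G|=d^2$); your step (4) speaks of a kernel ``of index $d$ in $H$'' and a quotient ``of order $d$'', which only makes sense for $|H|=d^2$. Second, and more importantly, trivial \emph{pointwise} stabilizer (the kernel of the action) does not imply trivial stabilizers of individual projections for a non-transitive action of an abelian group --- e.g.\ $\Z_2\times\Z_2$ acting on four points with each factor swapping a different pair has trivial kernel but is not transitive --- so your step (5) assumes what is to be proved. The fixed-point-freeness coming from the displayed trace identity is what actually closes this gap.
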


\begin{proof}
Denote the subgroups of $G$ generating $\mathcal{C}$ by $H_1, \dots, H_{d+1-k}$, where $k \leq \sqrt{d}$. By Theorem \ref{thm:rigidity}, we know that the completion of $\mathcal{C}$ is unique; label the subalgebras of $M_d(\C)$ corresponding to the completion bases by $\mathcal{B}_1, \dots, \mathcal{B}_k$. Fixing arbitrary $H_j$ and $\mathcal{B}_\ell$, we must first show that the conjugation action of $H_j$ on the rank one projections of $\mathcal{B}_\ell$ is well-defined. Clearly, $e \in H_j$ corresponds to the identity map, so we focus on non-identity $a \in H_j$. Note that each $\mathcal{A}_{H_{j'}}$ is an invariant subspace under $a$, i.e.\
\begin{equation*}
U(a)\mathcal{A}_{H_{j'}}U(a)^* = \mathcal{A}_{H_{j'}}.
\end{equation*}
Indeed, this follows from $U$ being a projective representation of an abelian group, since for any $b \in H_{j'}$,
\begin{equation*}
U(a)U(b)U(a)^* \propto U(a + b - a) = U(b),
\end{equation*}
where $\propto$ denotes proportionality by a complex scalar. Therefore, $U(a)$ must take $\mathcal{B}_\ell$'s basis into one which is also unbiased with respect to the original set of MUBs. However, the first theorem tells us that our completion algebras are unique in this respect, so $U(a)\mathcal{B}_\ell U(a)^*$ must be some $\mathcal{B}_{\ell'}$. Now, let $Q \in \mathcal{B}_\ell$ be a rank one orthogonal projection, corresponding to a vector of $\mathcal{B}_\ell$'s basis. We have that $QU(a)^*Q$ is a scalar multiple of $Q$, so
\begin{equation*}
\Tr((U(a)QU(a)^*)Q) = \Tr(U(a)(QU(a)^*Q)) \propto \Tr(U(a)Q) = 0.
\end{equation*}
The final equality holds because $U(a)$ is a traceless element of $\mathcal{A}_{H_j} \neq \mathcal{B}_\ell$. Thus, $U(a)QU(a)^*$ is orthogonal to $Q$. Since the bases corresponding to $\mathcal{B}_{\ell'}$ for $\ell' \neq \ell$ are unbiased with respect to that of $\mathcal{B}_\ell$, $U(a)QU(a)^*$ must lie in $\mathcal{B}_\ell$. By our previous observation, this means that $U(a)\mathcal{B}_\ell U(a)^* = \mathcal{B}_\ell$, so $a$ determines a permutation of the rank one orthogonal projections of $\mathcal{B}_\ell$ without fixed points, as desired. Lastly, it is simple to check that map composition corresponds to group multiplication in $G$.

Now, we have a well-defined action, and the lack of fixed points under $a \neq 0$ means that all stabilizers under $H_j$ are trivial. By the orbit-stabilizer theorem, each rank one projection $Q \in \mathcal{B}_\ell$ must then have full orbit size $|H_j| = d = \dim(\mathcal{B}_\ell)$, so the action is transitive.
\end{proof}

With this lemma in hand, our second theorem is straightforward.

\begin{theorem}\label{thm:niceRigidity}
Let $\mathcal{E}$ be a nice error basis for $M_d(\C)$ with abelian index group. If an $\mathcal{E}$-nice set of at least $d+1-\sqrt{d}$ MUBs can be completed to a full set of $d+1$ MUBs, then this completion is unique and $\mathcal{E}$-nice.
\end{theorem}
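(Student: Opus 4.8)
The plan is to deduce this theorem almost immediately from Lemma \ref{le1} together with the uniqueness already established in Theorem \ref{thm:rigidity}. First I would invoke Theorem \ref{thm:rigidity} (via Corollary \ref{cor:completionUniqueness}, applied to the $(d+1-k)$-net of MASAs corresponding to the given $\mathcal{E}$-nice collection with $k \leq \sqrt{d}$) to conclude that the completion, if it exists, is unique. Thus it remains to show that this unique completion is $\mathcal{E}$-nice; i.e., each completing MASA $\mathcal{B}_\ell$ equals $\mathcal{A}_{H}$ for some order-$d$ subgroup $H \leq G$ with $U$ restricted to $H$ commuting.

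The key step is to extract, for a fixed completing MASA $\mathcal{B}=\mathcal{B}_\ell$, a subgroup $H_\mathcal{B} \leq G$ such that $\mathcal{B} = \mathcal{A}_{H_\mathcal{B}}$. Here I would use Lemma \ref{le1}: for each subgroup $H_j$ generating a basis of the original collection $\mathcal{C}$, the conjugation action $Q \mapsto U(a)QU(a)^*$ of $H_j$ is transitive (indeed simply transitive) on the rank-one projections of $\mathcal{B}$. Fix one rank-one projection $Q_0 \in \mathcal{B}$. Simple transitivity gives a bijection between $H_j$ and the rank-one projections of $\mathcal{B}$, via $a \mapsto U(a)Q_0U(a)^*$. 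Expanding $Q_0$ in the error basis, $Q_0 = \tfrac{1}{d}\sum_{g \in G} c_g U(g)$ with $c_e = \tfrac{1}{d}$, and using $U(a)U(g)U(a)^* = \sigma(a,g)\,U(g)$, one sees that $U(a)Q_0U(a)^* = \tfrac{1}{d}\sum_g \sigma(a,g) c_g U(g)$; orthogonality of distinct rank-one projections of $\mathcal{B}$ then forces the support $\{g : c_g \neq 0\}$ to be closed under the natural pairing, and a counting/character argument (the support has exactly $d$ elements, matching $|H_j| = \dim \mathcal{B} = d$) identifies it with a subgroup $H_\mathcal{B} \leq G$ of order $d$; moreover $\sigma$ must be trivial on $H_\mathcal{B} \times H_\mathcal{B}$ since $\mathcal{B}$ is abelian. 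Thus $\mathcal{B} = \mathcal{A}_{H_\mathcal{B}}$, so $\mathcal{B}$ is an $\mathcal{E}$-nice MASA, and repeating for each $\ell$ shows the whole completion is $\mathcal{E}$-nice.

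The main obstacle I anticipate is making the passage ``simply transitive action $\Rightarrow$ the support is a subgroup and $\mathcal{B} = \mathcal{A}_{H_\mathcal{B}}$'' fully rigorous. One must verify that the $d$ projections $U(a)Q_0U(a)^*$ for $a \in H_j$ are genuinely distinct \emph{as projections} (not merely up to phase) — which follows from the no-fixed-point statement in Lemma \ref{le1} — and then argue that the common support of these $d$ mutually orthogonal rank-one projections, which spans the $d$-dimensional $\mathcal{B}$, is precisely an index-$d$... more precisely an order-$d$ subgroup. The cleanest route is: let $T \subseteq G$ be the support of $Q_0$; since $Q_0^2 = Q_0$ and $U(g)U(h) = \lambda(g,h)U(gh)$, the set $T$ is closed under the group operation up to the cocycle, hence $T$ is a subgroup once we know $e \in T$ (which holds as $c_e = \tfrac1d \neq 0$) and $|T|$ is finite; the dimension count $|T| = \dim\mathcal{A}_T = \dim\mathcal{B} = d$ pins it down. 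I would also double-check the edge cases $k=1$ and small $d$ separately, though these are trivial. Everything else — uniqueness, the action being well-defined and simply transitive — is already supplied by the preceding results.
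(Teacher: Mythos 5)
Your overall skeleton matches the paper's: uniqueness comes from Theorem \ref{thm:rigidity}, and Lemma \ref{le1} is the engine for showing that each completing MASA $\mathcal{B}_\ell$ has the form $\mathcal{A}_H$. But the step where you actually produce the subgroup is where the proposal breaks down. Your ``cleanest route'' asserts that $Q_0^2=Q_0$ together with $U(g)U(h)=\lambda(g,h)U(gh)$ forces the support $T=\{g: c_g\neq 0\}$ to be closed under the group operation. That implication is false, because cancellations can occur: in $M_2(\C)$ with the Pauli error basis indexed by $\Z_2^2$, the rank-one projection $\tfrac12\bigl(I+\tfrac{1}{\sqrt2}X+\tfrac{1}{\sqrt2}Z\bigr)$ has support $\{(0,0),(1,0),(0,1)\}$, which is not a subgroup. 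Likewise, the counting claim that $|T|=d$ is asserted but never derived: the orthogonality relations $\tau(Q_aQ_{a'})=0$ for $a\neq a'$ in $H_j$ only say that the weights $|c_g|^2$ push forward to the uniform measure on $\widehat{H_j}$ under $g\mapsto\sigma(\cdot,g)|_{H_j}$; combined with the fact that the $d$ projections $Q_a$ span $\mathcal{B}_\ell$ inside $\vspan\{U(g):g\in T\}$ this gives $|T|\geq d$, but no upper bound and no subgroup structure. Since the whole point is to prove $\mathcal{B}_\ell=\mathcal{A}_H$, you cannot presuppose the support structure you are trying to establish.

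The paper closes this gap by a different mechanism. Fix $0\neq a\in H_1$ and a rank-one $Q\in\mathcal{B}_\ell$; by Lemma \ref{le1} there is, for each $j=2,\dots,d+1-k$, a nonzero $b_j\in H_j$ with $U(a)QU(a)^*=U(b_j)QU(b_j)^*$, i.e.\ $U(a-b_j)$ commutes with $Q$, and then (conjugating by the $U(g)$ realizing transitivity) with every rank-one projection of $\mathcal{B}_\ell$; maximal abelianness forces $U(a-b_j)\in\mathcal{B}_\ell$. The subgroup $N_\ell$ generated by these $d-k$ distinct nonzero elements satisfies $d+1-k\leq|N_\ell|\leq\dim\mathcal{B}_\ell=d$ while $|N_\ell|$ divides $|G|=d^2$, and the number-theoretic Proposition \ref{prop:divisors} (the gap between $d$ and the next smaller divisor of $d^2$ is at least $\lfloor\sqrt d\rfloor\geq k$) then forces $|N_\ell|=d$, hence $\mathcal{B}_\ell=\mathcal{A}_{N_\ell}$. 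This divisor-gap ingredient is entirely absent from your sketch and is precisely what makes the counting close; any salvage of your Fourier-analytic route would need a substitute for it.
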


\begin{proof}
We have already demonstrated uniqueness; assume existence and label the unique subalgebras corresponding to the added MUBs by $\mathcal{B}_1, \dots, \mathcal{B}_k$, where $k \leq \sqrt{d}$. Suppose the nice error basis $\mathcal{E} = \{U(g) \mid g \in G\}$ has abelian index group $G$ with commutator map $\sigma$, and let $H_1, \dots, H_{d+1-k}$ be the subgroups of $G$ corresponding to the original $\mathcal{E}$-nice bases. Fix a non-zero element $a \in H_1$ and a rank one orthogonal projection $Q$ in some $\mathcal{B}_\ell$. By Lemma \ref{le1}, we know that for each $j = 2, \dots, d + 1 - k$, there exists a non-zero $b_j \in H_j$ such that
\begin{align*}
&U(a)QU(a)^* = U(b_j)QU(b_j)^*\\
\iff & U(b_j)^*U(a)Q = QU(b_j)^*U(a)\\
\iff & U(a - b_j)Q = QU(a - b_j).
\end{align*}
Now, for any other rank one orthoprojection $Q' \in \mathcal{B}_\ell$, take $g \in G$ such that $U(g)QU(g)^* = Q'$. It is easy to check that $U(r)U(s)^* = \sigma(s,r)U(s)^*U(r)$ for $r,s \in G$, so conjugating the final equation by $U(g)$ gives
\begin{align*}
    &U(g)U(a\!-\!b_j)QU(g)^* = U(g)QU(a\!-\!b_j)U(g)^*\\
    \!\iff &\sigma(g, a\!-\!b_j)U(a\!-\!b_j)U(g)QU(g)^* \!= \sigma(g, 
    a\!-\!b_j)U(g)QU(g)^*U(a\!-\!b_j)\\
    \!\iff &U(a\!-\!b_j)Q' = Q'U(a\!-\!b_j),
\end{align*}
i.e.\ $U(a - b_j)$ commutes with $Q'$ as well. But to commute with all rank one projections of $\mathcal{B}_\ell$, we must have $U(a - b_j) \in \mathcal{B}_\ell$, because $\mathcal{B}_\ell$ is maximally abelian. Next, define the generated subgroup
\begin{equation*}
N_\ell = \langle a - b_2, \dots, a - b_{d+1-k} \rangle,
\end{equation*}
noting that $|N_\ell| \geq d +1 - k$ since the $d-k$ generators are all distinct and non-zero. Because $\mathcal{B}_\ell$ is a subalgebra, it follows that $\mathcal{A}_{N_\ell}$ is contained in $\mathcal{B}_\ell$ and $|N_\ell| \leq \dim\mathcal{B}_\ell = d$. Additionally, as a subgroup of $G$, $N_\ell$ must have order dividing $|G| = d^2$. With $k \leq \sqrt{d}$, Proposition \ref{prop:divisors} (see Appendix) implies that the only divisor of $d^2$ between $d - (k - 1)$ and $d$ is $d$ itself, so we have that $|N_\ell| = d$ and $\mathcal{B}_\ell = \mathcal{A}_{N_\ell}$. Since $\ell$ was arbitrary, it follows that all of the added MUBs are generated by subgroups of $G$; i.e.\ the complete set of MUBs is also $\mathcal{E}$-nice.
\end{proof}

\subsection{Uncompletability results}
We have now developed the machinery to establish several uncompletability results for MUBs. The first of these is a consequence of the following connection between our groups of interest and combinatorial $k$-nets \cite{aschbacher07}.

\begin{lemma}\label{lem:groupKnets}
Let $G$ be a group of order $d^2$ together with a collection $\mathcal{C}$ of subgroups of $G$ of order $d$ with trivial pairwise intersections. Then the incidence structure whose points are elements of $G$ and whose lines are the left cosets of the subgroups defines a combinatorial $|\mathcal{C}|$-net.
\end{lemma}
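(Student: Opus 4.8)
The plan is to verify the three defining properties of a combinatorial $k$-net directly, with $k = |\mathcal{C}|$ and with the parallel classes taken to be, for each $H \in \mathcal{C}$, the family $\{gH : g \in G\}$ of left cosets of $H$. Since $|H| = d$ and $[G:H] = |G|/|H| = d$, each such family consists of $d$ lines, each line is a $d$-element subset of the $d^2$-element point set $G$, and the $d$ cosets of a fixed $H$ partition $G$. The latter immediately yields property (iii): given a point $p$ and a line $aH$, the coset $pH$ is a line of the same parallel class containing $p$. It also records that every line has $d$ points and every parallel class has $d$ lines, so that once (i) and (ii) are checked we will have a $|\mathcal{C}|$-net of order $d$.

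The key step I would record first is the elementary coset fact that if $x \in aH \cap bH'$ for subgroups $H, H' \le G$, then $aH = xH$ and $bH' = xH'$, whence $aH \cap bH' = x(H \cap H')$. Taking $H = H'$ shows that two cosets of the same subgroup are equal or disjoint; taking $H \ne H'$ in $\mathcal{C}$ and using $H \cap H' = \{e\}$ shows that two cosets of distinct subgroups of $\mathcal{C}$ meet in \emph{at most} one point. To see that they meet in \emph{exactly} one point, note that $aH \cap bH' \ne \emptyset$ if and only if $b^{-1}a \in H'H$, and that the product-set identity $|H'H| = |H'|\,|H|/|H' \cap H| = d^2 = |G|$ forces $H'H = G$; hence the intersection is nonempty. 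This establishes (ii): any two lines are either cosets of a common subgroup (equal or disjoint, i.e.\ parallel) or cosets of two distinct subgroups of $\mathcal{C}$ (meeting in a single point). For (i), it then suffices to check that the parallel relation coincides with ``cosets of the same $H \in \mathcal{C}$'': two cosets of distinct subgroups always intersect and are never equal (from $aH = bH'$ one deduces $b^{-1}a \in H'$ and then $H = H'$), so they are never parallel. Thus $||$ is an equivalence relation partitioning the lines into exactly $|\mathcal{C}|$ classes, one per subgroup.

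The only point beyond routine bookkeeping is the exact-single-intersection claim for cosets of different subgroups, which rests entirely on the cardinality identity $|H'H| = |H'|\,|H|/|H' \cap H|$ together with the hypotheses $|H| = |H'| = d$, $|G| = d^2$, $H \cap H' = \{e\}$ forcing $H'H = G$; I note this argument uses nothing about commutativity of $G$. Having verified (i)--(iii) with every line of size $d$ and every parallel class of size $d$, we conclude that the incidence structure is a $|\mathcal{C}|$-net (of order $d$).
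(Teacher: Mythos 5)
Your proof is correct, and it verifies all three net axioms carefully (including the point, glossed over in the paper, that a coset cannot belong to two different subgroups, so the parallel classes are genuinely disjoint). The one place where you and the paper genuinely diverge is the proof that two cosets of \emph{distinct} subgroups meet in exactly one point. Both arguments share the key coset fact that a nonempty intersection $aH \cap bH'$ is a coset of $H \cap H' = \{e\}$, hence a singleton; the difference is in how nonemptiness is forced. You invoke the product-set identity $\abs{H'H} = \abs{H'}\,\abs{H}/\abs{H' \cap H} = d^2$, so that $H'H = G$ and $b^{-1}a \in H'H$ automatically. The paper instead uses a pigeonhole count: the $d$ cosets of $H'$ partition the $d$-element set $aH$ into pieces of size at most one, so if any piece were empty another would have to have size greater than one, contradicting the singleton fact. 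The two mechanisms are equally elementary and equally general (neither uses commutativity of $G$); yours has the small advantage of producing the intersection point constructively via $H'H = G$, while the paper's is shorter because it recycles the singleton bound it already needs. Either way the lemma stands.
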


\begin{proof}
Clearly, the disjoint cosets of a fixed subgroup form a parallel class. If two cosets of different subgroups did not intersect at a single element, then we would certainly have an intersection of two cosets $Ax$ and $By$ with size greater than 1. However, the non-empty intersection $Ax \cap By$ is known to be a coset of the trivial intersection $A \cap B$.
\end{proof}

With this result in hand, we can strengthen Theorem \ref{thm:niceRigidity}, proving an analog of Corollary \ref{cor:completionUniqueness} for nice MUBs with abelian index groups.

\begin{corollary}\label{cor:groupRigidity}
Let $\mathcal{E}$ be a nice error basis for $M_d(\C)$ with abelian index group $G$, and suppose that $\mathcal{C}$ is a collection of at least $d+1-\sqrt{d}$ subgroups of $G$ corresponding to an $\mathcal{E}$-nice set of MUBs. If these can be completed to a full set of $d+1$ MUBs, then this completion is unique and $\mathcal{E}$-nice. Furthermore, if a subgroup $H \leq G$ intersects each subgroup of $\mathcal{C}$ trivially, then $H$ must generate one of the completion bases.
\end{corollary}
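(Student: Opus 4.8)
The plan is to transfer the problem to the commutative algebra $\C^G$, where ``transversal'' acquires its classical combinatorial meaning, and then to invoke Corollary~\ref{cor:completionUniqueness}. Theorem~\ref{thm:niceRigidity} already gives that the completion is unique and $\mathcal{E}$-nice; write $N_1,\dots,N_k$, with $k=d+1-\lvert\mathcal{C}\rvert\le\sqrt d$, for the subgroups of $G$ generating the $k$ added bases. First I would record that the $d+1$ subgroups in $\mathcal{C}\cup\{N_1,\dots,N_k\}$ have pairwise trivial intersections: if two of them, say $N$ and $N'$, met nontrivially, then the (quasi-orthogonal) subalgebras $\mathcal{A}_N$ and $\mathcal{A}_{N'}$ would both contain the more-than-one-dimensional subalgebra associated with $N\cap N'$, contradicting the fact that quasi-orthogonal $*$-subalgebras intersect only in $\C I$. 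By Lemma~\ref{lem:groupKnets}, the left cosets of these $d+1$ subgroups then form a combinatorial $(d+1)$-net on $G$, i.e.\ an affine plane $\mathcal{P}$ of order $d$.

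Next I would work inside $\mathfrak{A}=\C^G$, which has dimension $d^2$. Let $\mathcal{N}\subset\C^G$ be the collection of indicator functions of the cosets of the members of $\mathcal{C}$; by Lemma~\ref{lem:groupKnets} (and the identification of $k$-nets over $\C^X$ with classical ones) this is a $(d+1-k)$-net of order $d$ over $\C^G$, and it is completed by the indicator functions of the lines of $\mathcal{P}$. Since $k\le\sqrt d$, Corollary~\ref{cor:completionUniqueness} applies and tells us that $\chi_T\in\C^G$ is an extra line of this completion precisely when $\tau(\chi_T\chi_L)=\tfrac1{d^2}$ for every line $L$ of the $\mathcal{C}$-net. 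In $\C^G$ one has $\tau(\chi_T\chi_L)=\lvert T\cap L\rvert/d^2$, so this condition says exactly that $T$ meets every line of the $\mathcal{C}$-net in a single point, i.e.\ that $T$ is a transversal of the $\mathcal{C}$-net. Hence the transversals of the $\mathcal{C}$-net are precisely the cosets of $N_1,\dots,N_k$.

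Finally, let $H\le G$ be a subgroup of order $d$ with $H\cap A=\{e\}$ for every $A\in\mathcal{C}$. For a fixed $A\in\mathcal{C}$, any nonempty intersection of $H$ with a coset of $A$ is a coset of $H\cap A=\{e\}$, hence a single element; since the $d$ cosets of $A$ partition $G$ and $\lvert H\rvert=d$, the subgroup $H$ meets each of them in exactly one point. So $H$ is a transversal of the $\mathcal{C}$-net, and by the previous step $H=N_\ell x$ for some $\ell$ and some $x\in G$. As $e\in H$, this forces $x\in N_\ell$ and therefore $H=N_\ell$, so $H$ generates one of the completion bases, as claimed.

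I expect the computations to be routine, so the only delicate points should be the two translations above: recognising that the trace condition in Corollary~\ref{cor:completionUniqueness}, read inside $\C^G$, is literally the defining property of a transversal, and checking that quasi-orthogonality forces the auxiliary subgroup intersections to be trivial so that Lemma~\ref{lem:groupKnets} genuinely yields an affine plane. (One could alternatively stay in $M_d(\C)$: the rank-one projections of $\mathcal{A}_H$ lie in the linear span of the $k$-net formed by the added MUBs, so Theorem~\ref{thm:rigidity} places each of them in that net directly; but the $\C^G$ argument has the advantage of not requiring any control over the internal structure of $\mathcal{A}_H$.)
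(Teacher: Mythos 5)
Your proof is correct and follows essentially the same route as the paper: pass to the combinatorial $(d+1-k)$-net on $G$ via Lemma~\ref{lem:groupKnets}, use Theorem~\ref{thm:niceRigidity} to get a unique, $\mathcal{E}$-nice completion, and apply Corollary~\ref{cor:completionUniqueness} inside $\C^G$ to conclude that $H$, being a transversal, must be one of the completion subgroups. You additionally make explicit two details the paper leaves implicit --- that quasi-orthogonality forces the completion subgroups to have pairwise trivial intersections (so that Lemma~\ref{lem:groupKnets} really yields an affine plane), and that a coset of $N_\ell$ containing the identity equals $N_\ell$ --- which is a tightening of the same argument rather than a different one.
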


\begin{proof}
By Lemma \ref{lem:groupKnets}, $\mathcal{C}$ corresponds to a combinatorial $(d+1-k)$-net $\mathcal{N}$ of order $d$, for some $k \leq \sqrt{d}$. If the set of MUBs can be completed, then Theorem \ref{thm:niceRigidity} states that this completion is unique and $\mathcal{E}$-nice, corresponding to a set of subgroups that provide a completion of $\mathcal{N}$ to an affine plane. By Corollary \ref{cor:completionUniqueness}, this affine plane is unique and contains any parallel class (or line, for that matter) extending $\mathcal{N}$. Since $H$ intersects each subgroup of $\mathcal{C}$ trivially, it corresponds to a parallel class extending $\mathcal{N}$ and must therefore be one of the completion subgroups.
\end{proof}

Next, we prove our first uncompletability result.

\begin{corollary}\label{cor:uncompletability1}
Let $\mathcal{E}$ be a nice error basis for $M_d(\C)$ with abelian index group $G$ and commutator map $\sigma$. Suppose that $\mathcal{C}$ is a collection of at least $d+1-\sqrt{d}$ subgroups of $G$ corresponding to an $\mathcal{E}$-nice set of MUBs. If there exists another subgroup $H \leq G$ which intersects those of $\mathcal{C}$ trivially, such that $\sigma|_{H \times H} \not\equiv 1$, then the set of MUBs cannot be completed.
\end{corollary}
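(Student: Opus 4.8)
The plan is to argue by contradiction using Corollary~\ref{cor:groupRigidity}. Suppose, for the sake of contradiction, that the $\mathcal{E}$-nice set of MUBs corresponding to $\mathcal{C}$ \emph{can} be completed to a full set of $d+1$ MUBs. Then Corollary~\ref{cor:groupRigidity} applies: the completion is unique and $\mathcal{E}$-nice, and moreover, since the subgroup $H \leq G$ intersects each subgroup of $\mathcal{C}$ trivially, $H$ must generate one of the completion bases. In particular, $\mathcal{A}_H$ is a MASA of $M_d(\C)$; that is, $\mathcal{A}_H$ is an abelian subalgebra.

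The next step is to observe that $\mathcal{A}_H$ being abelian forces $\sigma|_{H \times H} \equiv 1$. Indeed, as noted in the paragraph preceding Proposition~\ref{prop:niceMUBs} (in its abelian refinement), for a subgroup $H \leq G$ the subalgebra $\mathcal{A}_H$ is abelian \emph{exactly} when $\sigma|_{H \times H} \equiv 1$, since $U(g)$ and $U(h)$ commute if and only if $\sigma(g,h) = 1$ and $\mathcal{A}_H$ is spanned by $\{U(h) \mid h \in H\}$. But this directly contradicts our hypothesis that $\sigma|_{H \times H} \not\equiv 1$. Hence no completion exists.

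I expect the entire argument to be essentially immediate once Corollary~\ref{cor:groupRigidity} is in place; the only point requiring a moment's care is making explicit the equivalence ``$\mathcal{A}_H$ abelian $\iff$ $\sigma|_{H\times H}\equiv 1$,'' which has already been recorded in the text, so there is no real obstacle. One might also remark that the hypothesis $\sigma|_{H\times H}\not\equiv 1$ is precisely what prevents $H$ from itself generating an $\mathcal{E}$-nice basis in the first place, so the corollary is saying that the ``obvious'' obstruction to $H$ extending the system is in fact the \emph{only} way the system could fail to force $H$ into the completion — and since $H$ cannot be in the completion (its algebra is non-abelian), the system has no completion at all.
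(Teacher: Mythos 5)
Your proposal is correct and follows exactly the paper's argument: assume a completion exists, invoke Corollary \ref{cor:groupRigidity} to force $H$ to generate one of the completion bases, and derive a contradiction from the fact that $\sigma|_{H\times H}\not\equiv 1$ makes $\mathcal{A}_H$ non-abelian. No gaps; the extra remark on the equivalence ``$\mathcal{A}_H$ abelian $\iff \sigma|_{H\times H}\equiv 1$'' is a fair elaboration of what the paper leaves implicit.
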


\begin{proof}
If the set of MUBs could be completed, then the previous corollary would imply that $H$ generates one of the completion bases. However, since $\sigma|_{H \times H} \not\equiv 1$, the the subalgebra $\mathcal{A}_H$ is non-abelian and does not correspond to a basis.
\end{proof}

Now, we can provide an explicit example of an uncompletable set of MUBs. 

\begin{example}
Recall the subgroups $R_{x,y}$ and $S$ from Example \ref{ex:discreteWeylPrimeSquare}. If $p>2$ is an odd prime, then the system of $p^2 - p + 1$ nice MUBs in $\C^{p^2}$ corresponding to $S$ and $R_{x,y}$ for $x,y \in \Z_p, x \neq 0,$ cannot be completed.
\end{example}

\begin{proof}
The MUBs are generated by
\begin{align*}
    R_{x,y} &= \inner{(1,x,y,0), (0,-y,-Dx,1)}, \quad x,y \in \Z_p, x \neq 0,\\
    S &= \inner{(1,0,0,0),(0,0,1,0)}.
\end{align*}
However, the subgroup
\begin{equation*}
    T = \inner{(1,0,0,1),(0,1,-1,0)} \leq \Z_p^4
\end{equation*}
has order $p^2$, intersects these trivially, and does not correspond to an abelian subalgebra. Indeed, since $p > 2$, $\sigma((1,0,0,1),(0,1,-1,0)) = \omega^2 \neq 1$, where $\omega = e^{2\pi i / p}$. Therefore, the set of MUBs cannot be completed.
\end{proof}

In \cite{szanto16}, Sz\'{a}nto constructs a similar system of $p^2 - p + 2$ MUBs that cannot be completed. Specifically, he examines the subgroups of $\Z_p^4$ given by
\begin{align*}
    A_{x,y} &= \inner{(0,1,x,y^{-1}(1-Dx^2)),(1,0,-y,Dx)},\\
    B_x &= \inner{(0,1,x,0), (1,0,0,-xD)},
\end{align*}
where $D$ is again a quadratic nonresidue. Taking $q$ such that $q^2 = -D^{-1}$ and prime $p \equiv 3 \pmod{4}$, he shows that the MUBs of $\C^{p^2}$ associated with $B_q$, $B_{p-q}$, and $A_{x,y}$ for \mbox{$x,y \in \Z_p$}, \mbox{$y \neq 0$}, cannot be completed. Furthermore, he notes that these subgroups have trivial intersections with the remaining $B_x$ subgroups, which fail to generate abelian subalgebras. Thus, Corollary \ref{cor:uncompletability1} implies that this uncompletability result still holds if one of the MUBs is discarded.

While Sz\'{a}nto's proof requires an extra basis and is only valid for dimension $p \equiv 3 \pmod{4}$, he actually proves something stronger, that this construction cannot be extended by even a single basis vector. Previously, many systems of MUBs have been shown to satisfy a weaker form of unextendibility \cite{thas16}.

\begin{definition}
Let $\mathcal{E}$ be a nice error basis for $M_d(\C)$. A set of $\mathcal{E}$-nice MUBs in $\C^d$ is called \textit{weakly unextendible} if there does not exist another mutually unbiased $\mathcal{E}$-nice basis.
\end{definition}

The usual definition is more general but reduces to ours in the case of nice MUBs, which appear most often in applications. If a set of $\mathcal{E}$-nice MUBs is large enough, then Theorem \ref{thm:niceRigidity} tells us that any basis extending it must also be $\mathcal{E}$-nice, so we have the following.

\begin{corollary}
A weakly unextendible set of at least $d+1-\sqrt{d}$ nice MUBs in $\C^d$ cannot be completed.
\end{corollary}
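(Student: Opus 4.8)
The plan is a short argument by contradiction that simply chains Theorem~\ref{thm:niceRigidity} with the definition of weak unextendibility. Let $\mathcal{E}$ be the nice error basis in question and let $\mathcal{C}$ be a weakly unextendible set of $\mathcal{E}$-nice MUBs in $\C^d$ with $\abs{\mathcal{C}} \geq d+1-\sqrt{d}$. Assume, toward a contradiction, that $\mathcal{C}$ admits a completion to a full system of $d+1$ MUBs. Since $\sqrt{d}\geq 1$, we may as well suppose $\abs{\mathcal{C}} \leq d$ (if $\abs{\mathcal{C}}=d+1$ the set is already complete and the statement is vacuous), so $\mathcal{C}$ is a proper subcollection of its completion: there is at least one basis $\mathcal{F}$ in the completion not belonging to $\mathcal{C}$, and by construction $\mathcal{F}$ is mutually unbiased to every member of $\mathcal{C}$.

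The key step is to apply Theorem~\ref{thm:niceRigidity}: because $\mathcal{C}$ is an $\mathcal{E}$-nice set of at least $d+1-\sqrt{d}$ MUBs that can be completed, its completion is unique and \emph{entirely $\mathcal{E}$-nice}; in particular the extra basis $\mathcal{F}$ is $\mathcal{E}$-nice. Thus $\mathcal{F}$ is an $\mathcal{E}$-nice basis mutually unbiased to all of $\mathcal{C}$, contradicting the hypothesis that $\mathcal{C}$ is weakly unextendible. Hence no completion of $\mathcal{C}$ can exist.

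I do not expect any genuine obstacle here: all of the substantive work has already been done in Theorem~\ref{thm:niceRigidity}, and this corollary is exactly the observation that ``every basis extending a sufficiently large $\mathcal{E}$-nice system is itself $\mathcal{E}$-nice'' is incompatible with ``no $\mathcal{E}$-nice basis extends the system.'' The only point needing a moment's care is the degenerate situation $\abs{\mathcal{C}} = d+1$, in which $\mathcal{C}$ is already complete and there is nothing to extend; this is dispatched by noting it makes the claim trivially true, so one may assume $\abs{\mathcal{C}}\leq d$ and produce the extra basis $\mathcal{F}$ as above.
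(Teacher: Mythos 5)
Your proof is correct and matches the paper's reasoning exactly: the paper derives this corollary in one sentence by noting that Theorem~\ref{thm:niceRigidity} forces any completion (hence any extending basis) of a sufficiently large $\mathcal{E}$-nice system to be $\mathcal{E}$-nice, which directly contradicts weak unextendibility. Your explicit handling of the degenerate case $\abs{\mathcal{C}}=d+1$ is a harmless extra precaution.
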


This simple corollary allows us to translate many previous weak unextendibility results into uncompletability results. Letting $\mathcal{E}_p = \{X_p^{j}Z_p^{\ell} \otimes X_p^{r}Z_p^{s} \mid (j,\ell,r,s) \in \Z_p^4\}$ denote the nice error basis introduced in Example \ref{ex:discreteWeylPrimeSquare}, we can express two theorems of Thas \cite{thas16} and Mandayam, Bandyopadhyay, Grassl, and Wootters \cite{mandayam14} in terms of nice MUBs.

\begin{theorem}[Mandayam et al. \cite{mandayam14}]\label{thm:weaklyUnextendible}
Given subgroups $H_1, \dots, H_5$ of $\Z_2^4$ corresponding to a complete set of $\mathcal{E}_2$-nice MUBs in $\C^4$, there exists exactly one subgroup $H \leq Z_2^4$ in $H_1 \cup H_2 \cup H_3$ which generates a distinct basis. This subgroup $H$, together with the remaining two subgroups $H_4$ and $H_5$, generates a weakly unextendible set of 3 $\mathcal{E}_2$-nice MUBs.
\end{theorem}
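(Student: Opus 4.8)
The plan is to pass to the symplectic geometry of $\F_2^4$. Writing $\omega = -1$ in the commutator map of $\mathcal{E}_2$, a subgroup $H\le\Z_2^4$ satisfies $\sigma|_{H\times H}\equiv 1$ exactly when it is totally isotropic for the alternating form $B(u,v) = u_1v_2+u_2v_1+u_3v_4+u_4v_3 \pmod 2$; if moreover $|H| = 4 = 2^2$, this means $H$ is a Lagrangian (i.e.\ a $2$-dimensional maximal totally isotropic) subspace, and then $\mathcal{A}_H$ is a MASA corresponding to a basis. Hence $H_1,\dots,H_5$ are five pairwise skew Lagrangians partitioning the $15$ nonzero vectors of $\F_2^4$, that is, a spread of $\mathrm{PG}(3,2)$ all of whose lines are $B$-isotropic. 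Next I would observe that an order-$4$ subgroup $H\subseteq H_1\cup H_2\cup H_3$ is either one of the $H_i$ or contains exactly one nonzero vector from each $H_i$ (two nonzero vectors of a fixed $H_i$ already generate it), in which case closure under addition forces those three vectors to sum to $0$. A short analysis, carried out below, shows there are exactly three such subgroups, and the first claim reduces to showing that \emph{exactly one} of them is totally isotropic.

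To handle that, I would decompose $\F_2^4 = H_1\oplus H_2$ (valid, since the two lines are skew). As $B$ is nondegenerate and $H_1,H_2$ are isotropic, $B$ restricts to a perfect pairing $P:H_1\times H_2\to\F_2$, and $H_3$ is the graph $\{(x,\phi x):x\in H_1\}$ of an isomorphism $\phi:H_1\to H_2$. The candidate subgroups are then $T_x = \vspan\{(x,0),(0,\phi x)\}$ for $x\in H_1\setminus\{0\}$ --- three of them --- and $T_x$ is isotropic iff $P(x,\phi x) = 0$. The isotropy of $H_3$ is equivalent to the bilinear form $\beta(x,x'):=P(x,\phi x')$ on $H_1$ being symmetric, and over $\F_2$ a symmetric form has $\F_2$-linear diagonal $x\mapsto\beta(x,x)$; such a functional vanishes on exactly one or on all three nonzero vectors, so there are one or three isotropic transversals. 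To rule out the second alternative I would bring in a fourth spread line $H_4 = \{(x,\eta x):x\in H_1\}$: skewness of $H_4$ to $H_1,H_2,H_3$ forces $\theta:=\phi^{-1}\eta:H_1\to H_1$ to have no eigenvalue over $\F_2$, so $\{x,\theta x\}$ is a basis for every $x\ne 0$, while isotropy of $H_4$ makes $x\mapsto P(x,\eta x)=\beta(x,\theta x)$ linear as well. If the diagonal of $\beta$ vanished identically, then $\beta$ would be a nondegenerate alternating form, hence a nonzero multiple of the determinant, so $\beta(x,\theta x)\ne 0$ for every $x\ne 0$ --- impossible for a linear functional. Thus exactly one transversal, call it $H$, is isotropic; it is clearly distinct from all of $H_1,\dots,H_5$ (it meets $H_1$), so $\mathcal{A}_H$ is the unique distinct basis inside $H_1\cup H_2\cup H_3$.

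For the rest, $\{H,H_4,H_5\}$ is a triple of pairwise mutually unbiased $\mathcal{E}_2$-nice bases, since $H$ is Lagrangian with all its nonzero vectors inside $H_1\cup H_2\cup H_3$ --- hence skew to $H_4$ and to $H_5$, which are themselves skew. To see weak unextendibility, suppose an $\mathcal{E}_2$-nice basis came from a Lagrangian $K\notin\{H,H_4,H_5\}$ skew to all three. Being skew to $H_4$ and $H_5$, the three nonzero vectors of $K$ avoid $H_4\cup H_5$ and thus lie in $H_1\cup H_2\cup H_3$; by the dichotomy above, $K\in\{H_1,H_2,H_3,H\}$. But $H$ is a transversal, so it meets each $H_i$, and of course it meets itself --- hence $K$ cannot also be skew to $H$, a contradiction. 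Therefore no $\mathcal{E}_2$-nice basis extends $\{H,H_4,H_5\}$.

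The one genuinely delicate point is the middle paragraph, and within it the exclusion of the ``three isotropic transversals'' degeneracy: this is where the $\F_2$-linearity of the diagonal of a symmetric bilinear form and the availability of the fourth spread line both do essential work --- everything else is bookkeeping inside $\mathrm{PG}(3,2)$. An alternative route, avoiding this analysis entirely, would be to invoke the fact that a complete set of MUBs in $\C^4$ is unique up to equivalence, thereby reducing the whole statement to one explicit check in the two-qubit Pauli spread, where one verifies that the unique extra line through the three ``coordinate'' MASAs is $\vspan\{I,\,X\otimes X,\,Z\otimes Z,\,Y\otimes Y\}$.
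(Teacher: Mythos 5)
Your argument is correct, but note that the paper itself offers no proof of this statement --- it is quoted verbatim as a result of Mandayam et al.\ and deferred entirely to the citation --- so there is no ``paper's approach'' to match. What you have produced is a legitimate self-contained derivation via the symplectic geometry of $\F_2^4$: the translation of $\mathcal{E}_2$-nice MASAs into Lagrangian lines of the form $B(u,v)=u_1v_2+u_2v_1+u_3v_4+u_4v_3$, the identification of the order-$4$ subgroups inside $H_1\cup H_2\cup H_3$ as either the $H_i$ themselves or one of the three transversals $T_x=\vspan\{(x,0),(0,\phi x)\}$, and the count of isotropic transversals via the linearity of the diagonal $x\mapsto\beta(x,x)$ of the symmetric form $\beta(x,x')=P(x,\phi x')$ are all sound. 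The genuinely nontrivial step --- excluding the degenerate case of three isotropic transversals by showing that if $\beta$ were alternating and nondegenerate then $x\mapsto\beta(x,\theta x)$ would be a linear functional on $\F_2^2$ nonvanishing on all three nonzero vectors --- is exactly where the fourth spread line must enter, and you use it correctly (skewness of $H_4$ to $H_3$ gives that $\theta=\phi^{-1}\eta$ has no eigenvalue over $\F_2$, and isotropy of $H_4$ gives the linearity). The unextendibility argument then follows from the partition property of the spread together with the transversal dichotomy. Compared with simply invoking the reference (or the uniqueness of the complete MUB system in $\C^4$, which you mention as an alternative), your route has the advantage of being elementary and of making visible precisely which geometric fact forces the ``exactly one'' in the statement; its cost is the case analysis in the middle paragraph, which a reader must check carefully but which I find to be complete.
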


\begin{theorem}[Thas \cite{thas16}]
For each prime $p$, there exists an unextendible set of $p^2 - p + 1$ or $p^2 - p + 2$ $\mathcal{E}_p$-nice MUBs in $\C^{p^2}$.
\end{theorem}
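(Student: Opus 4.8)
The statement is, in essence, a restatement of results of Thas and of Mandayam et al., so my approach is to make the translation into the $\mathcal{E}_p$-nice setting explicit rather than to reprove unextendibility from scratch. By Proposition \ref{prop:niceMUBs} (and the paragraph preceding it), an $\mathcal{E}_p$-nice set of $m$ MUBs in $\C^{p^2}$ is precisely a collection of $m$ order-$p^2$ subgroups of the index group $\Z_p^4$ with pairwise trivial intersections on each of which the commutator map $\sigma$ of $\mathcal{E}_p$ vanishes. Hence it suffices to exhibit, for every prime $p$, such a collection of cardinality $p^2-p+1$ or $p^2-p+2$ whose associated set of MUBs admits no mutually unbiased extending basis, and I would do this in the two cases $p=2$ and $p$ odd.

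For $p=2$ one takes the three $\mathcal{E}_2$-nice MUBs supplied by Theorem \ref{thm:weaklyUnextendible}: the distinguished subgroup $H \le \Z_2^4$ found inside $H_1 \cup H_2 \cup H_3$, together with $H_4$ and $H_5$. Here $3 = p^2-p+1$, and the original argument of Mandayam et al.\ \cite{mandayam14} shows these three bases cannot be extended by \emph{any} mutually unbiased basis. For odd $p$, I would invoke Thas's construction \cite{thas16}, which he phrases geometrically (in terms of spread-like configurations in the symplectic space carried by the index group $\Z_p^4$). The substance of the proof here is the translation: via the standard correspondence between such configurations and subgroups of the index group (cf.\ Lemma \ref{lem:groupKnets} and \cite{aschbacher07}), one checks that his family yields $p^2-p+1$ or $p^2-p+2$ order-$p^2$ subgroups of $\Z_p^4$ with pairwise trivial intersections and with $\sigma$ trivial on each, hence an $\mathcal{E}_p$-nice system of that size. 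The systems of Example \ref{ex:discreteWeylPrimeSquare} and of Sz\'ant\'o \cite{szanto16} --- namely $S$ together with the $R_{x,y}$, $x\neq 0$ (size $p^2-p+1$), respectively Sz\'ant\'o's $B_{\pm q}$ together with the $A_{x,y}$, $y\neq 0$ (size $p^2-p+2$, for $p\equiv 3\pmod{4}$) --- are concrete realizations of this pattern; the alternative between $p^2-p+1$ and $p^2-p+2$ reflects whether an extra ``fake'' parallel class can be adjoined, which is in turn governed by a quadratic-residue condition on $p$.

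The step I expect to be the genuine obstacle --- and the reason I would import it from \cite{thas16,mandayam14} rather than reprove it --- is the \emph{unextendibility} in the strong sense that no further basis at all can be added. The machinery developed above (Corollary \ref{cor:uncompletability1}, Theorem \ref{thm:niceRigidity}) only yields that such large $\mathcal{E}_p$-nice systems cannot be \emph{completed} to $p^2+1$ MUBs and admit no $\mathcal{E}_p$-nice extension; excluding a single non-nice extending basis is exactly the content of the cited results (for Sz\'ant\'o's system it even follows from the stronger fact that no extending \emph{vector} exists). Consequently, the self-contained part of the proof is bookkeeping: matching the external constructions to explicit subgroups of $\Z_p^4$, and verifying niceness, pairwise trivial intersections, and the count; the ``no extending basis'' assertion itself is quoted.
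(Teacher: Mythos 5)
The paper offers no proof of this statement at all --- it is imported verbatim as a citation of \cite{thas16}, with the translation into the $\mathcal{E}_p$-nice subgroup language left implicit --- so your plan of quoting the external constructions and checking that they yield order-$p^2$ subgroups of $\Z_p^4$ with trivial pairwise intersections and trivial restricted commutator map is exactly the paper's (non-)argument, and is fine. The one point to correct is your reading of ``unextendible'': in the context of this section it means \emph{weakly} unextendible in the sense of the preceding definition (no further $\mathcal{E}_p$-nice mutually unbiased basis exists), which is what Thas and Mandayam et al.\ actually establish; your claim that the cited arguments exclude \emph{any} extending basis over-credits them (that stronger fact is known only for special cases such as Sz\'ant\'o's system), and if it were what the theorem asserted, the paper's subsequent Proposition deducing uncompletability via Theorem \ref{thm:niceRigidity} would be redundant. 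With that reinterpretation your proposal needs no ``genuine obstacle'' step at all --- the hard content is precisely the weak unextendibility that you are licensed to quote.
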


Since these systems are sufficiently large, we have the following result.

\begin{proposition}
These MUBs cannot be completed.
\end{proposition}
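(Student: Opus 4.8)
The final statement asserts that the specific MUB systems of Thas and of Mandayam et al.\ (reformulated as $\mathcal{E}_p$-nice and $\mathcal{E}_2$-nice MUBs, respectively) cannot be completed to full sets of $d+1$ MUBs. The plan is to invoke the chain of results just established: each of these systems is by construction $\mathcal{E}$-nice for an abelian index group ($\Z_p^4$ or $\Z_2^4$), and each is \emph{weakly unextendible}, so the immediately preceding corollary (``a weakly unextendible set of at least $d+1-\sqrt d$ nice MUBs cannot be completed'') applies once we check the size threshold.

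First I would verify the cardinality condition $|\mathcal{C}| \geq d + 1 - \sqrt d$ for both families. For Thas's construction in $\C^{p^2}$, we have $d = p^2$, so $d + 1 - \sqrt d = p^2 - p + 1$; the system has $p^2 - p + 1$ or $p^2 - p + 2$ bases, which meets or exceeds this bound. For the Mandayam et al.\ construction in $\C^4$, we have $d = 4$, so $d + 1 - \sqrt d = 3$, and the system consists of exactly $3$ $\mathcal{E}_2$-nice MUBs, again meeting the bound exactly. Thus in all cases the hypothesis of the corollary is satisfied.

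Second, I would note that weak unextendibility is precisely the hypothesis of that corollary: by definition, a set of $\mathcal{E}$-nice MUBs is weakly unextendible if no further $\mathcal{E}$-nice basis is mutually unbiased to all of them. Both Thas's and Mandayam et al.'s systems are stated to be (weakly) unextendible in exactly this sense. Since a completion to a full set of $d+1$ MUBs, by Theorem \ref{thm:niceRigidity}, would have to be $\mathcal{E}$-nice itself, any basis of such a completion would be an $\mathcal{E}$-nice basis mutually unbiased to the original system, contradicting weak unextendibility. Hence no completion exists.

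The only subtlety — and the one point genuinely worth checking rather than asserting — is that the reformulations of these theorems in terms of $\mathcal{E}_p$-nice and $\mathcal{E}_2$-nice MUBs faithfully capture the original statements, in particular that the ``unextendibility'' in Thas's theorem coincides with \emph{weak} unextendibility in our sense (the excerpt already remarks that the usual more general notion reduces to ours for nice MUBs, so this is really just bookkeeping). Once that is granted, the proof is a one-line application: \emph{These systems have at least $d+1-\sqrt d$ members and are weakly unextendible, so by the preceding corollary they cannot be completed.} I do not expect any real obstacle here; the work was all done in establishing Theorem \ref{thm:niceRigidity} and its corollary.
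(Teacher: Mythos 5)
Your proposal is correct and matches the paper's own (essentially one-line) argument: both systems meet the size threshold $d+1-\sqrt{d}$ (namely $p^2-p+1$ for Thas and $3$ for Mandayam et al.) and are weakly unextendible, so the preceding corollary applies directly. The cardinality check you spell out is exactly the content of the paper's phrase ``since these systems are sufficiently large.''
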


We conclude this article by noting that in addition to the uniqueness result, in \cite{bruck} Bruck also derives a condition for \textit{existence}. If a classical $k$-net misses only a few parallel classes for completeness, then -- if it can be completed at all -- its completion is unique. However, it may happen that it cannot be completed. But Bruck proves that if even fewer parallel classes are missing, then the existence of a completion is automatic. Having examined uniqueness, we wonder whether one could derive such an existence result in our general setting and hence obtain it for MUBs as well. We believe that this could be an interesting topic for future investigations.

\textbf{Acknowledgement.} The authors are grateful for the environment provided by Budapest Semesters in Mathematics and for fruitful discussions on finite geometry with Zsuzsa Weiner.

\vfill

\pagebreak

\appendix
\section*{Appendix}\label{sec:appendix}

Theorem \ref{thm:rigidity} requires two inequalities pertaining to the eigenvalue bounds
\begin{equation*}
    \lambda_\pm = \frac{1}{2d}\Bigg(d - 2 \pm \sqrt{\big(d-2\big)^2 - 4\big(d-1\big)\big(k - 3 + \tfrac{1}{k}\big)}\Bigg).
\end{equation*}

\begin{lemma}\label{lem:inequality1}
For $2 \leq k \leq \sqrt{d}$,
\begin{equation*}
    \frac{\sqrt{(d-1)(k-2)}}{d-2}  < \frac{\frac{d}{d-1}\lambda_+}{1+\frac{d}{d-1}\lambda_+}.
\end{equation*}
\end{lemma}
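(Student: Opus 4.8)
The plan is to reduce the inequality to a statement purely about the quadratic $\lambda^2 - \frac{d-2}{d}\lambda + \frac{d-1}{d^2}(k-3+\tfrac1k) = 0$ of which $\lambda_\pm$ are the roots, and then to use the constraint $k \le \sqrt d$ to control $\lambda_+$ from below. Write $\gamma = \frac{d}{d-1}\lambda_+$; since $t \mapsto t/(1+t)$ is increasing on $(0,1)$, the right-hand side $\gamma/(1+\gamma)$ is increasing in $\lambda_+$, so it suffices to produce a sufficiently good \emph{lower} bound on $\lambda_+$ in terms of $d$ and $k$, and then verify the resulting inequality in $d,k$ alone. From the quadratic, $\lambda_+ = \frac{1}{2d}\big(d-2 + \sqrt{(d-2)^2 - 4(d-1)(k-3+\tfrac1k)}\big)$; the discriminant is nonnegative precisely when $k \le \sqrt d$ (as noted in the lemma preceding the theorem), so $\lambda_+$ is real and at least $\frac{d-2}{2d}$, and in fact $\lambda_+ \ge \frac{d-2}{2d}$ always while $\lambda_+ \to \frac{d-1}{d}$ as $k \to 2$.

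\textbf{Step 1.} Rearrange the target inequality. Since all quantities are positive, cross-multiplying and clearing denominators turns $\frac{\sqrt{(d-1)(k-2)}}{d-2} < \frac{\gamma}{1+\gamma}$ into a polynomial inequality: $(1+\gamma)\sqrt{(d-1)(k-2)} < (d-2)\gamma$, i.e.\ $\sqrt{(d-1)(k-2)} < \gamma\big((d-2) - \sqrt{(d-1)(k-2)}\big)$. One should first check that the bracket on the right is positive (this needs $(d-2)^2 > (d-1)(k-2)$, which follows comfortably from $k \le \sqrt d$ once $d$ is not too small — the small cases $d \le$ a fixed constant can be dispatched by hand, consistent with the ``six orders between 10 and 19'' caveat in Remark~\ref{betterboundremark}, though here we only need the clean bound $k\le\sqrt d$). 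Then the inequality becomes $\gamma > \dfrac{\sqrt{(d-1)(k-2)}}{(d-2) - \sqrt{(d-1)(k-2)}}$.

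\textbf{Step 2.} Bound $\gamma$ below. Substitute $\gamma = \frac{\lambda_+}{1} \cdot \frac{d}{d-1}$ and the explicit formula for $\lambda_+$. It is cleanest to use the form $\lambda_+\lambda_- = \frac{d-1}{d^2}(k-3+\tfrac1k)$ and $\lambda_+ + \lambda_- = \frac{d-2}{d}$ together with $\lambda_+ \ge \lambda_-$, which gives $\lambda_+ \ge \frac{d-2}{2d}$ and, more usefully, $\lambda_+ = \frac{d-2}{2d} + \frac{1}{2}\sqrt{(\tfrac{d-2}{d})^2 - \tfrac{4(d-1)}{d^2}(k-3+\tfrac1k)}$. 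Factor out $\frac1d$: $\lambda_+ = \frac{1}{2d}\big(d-2 + \sqrt{(d-2)^2 - 4(d-1)(k-3+\tfrac1k)}\big)$, so $\gamma = \frac{1}{2(d-1)}\big(d-2 + \sqrt{(d-2)^2 - 4(d-1)(k-3+\tfrac1k)}\big)$.

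\textbf{Step 3.} Plug in and reduce to an inequality in $d$ and $k$ only. After substituting the expression for $\gamma$ from Step 2 into the inequality from Step 1 and clearing the square roots (squaring once, being careful that both sides are positive, which was arranged in Step 1), one is left with a polynomial inequality in $d$ and $k$. Using $k \le \sqrt d$, i.e.\ $k^2 \le d$, bound every occurrence of $d$ from below by $k^2$ (or, where the direction demands it, keep $d$ and use $k \le \sqrt d$ to bound $k$); the claim should then follow from a monotonicity/calculus argument in the single variable $k \ge 2$ (or equivalently track the ratio as a function of $d$ with $k = \sqrt d$, where, as the paper notes, the inequality tends to $\tfrac12 < d^{-1/4}$ in the limit — visibly true for large $d$, and the finitely many small $d$ checked directly).

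\textbf{Main obstacle.} The genuine difficulty is the bookkeeping in Step 3: after squaring, the polynomial in $d$ and $k$ is messy, and proving it holds for \emph{all} $2 \le k \le \sqrt d$ simultaneously — rather than just asymptotically — requires care about which terms to bound using $k^2 \le d$ versus leaving intact, plus a handful of genuinely small $(d,k)$ pairs (small $d$, $k=2$ or $3$) that may need to be verified numerically. The structural steps (reducing to $\gamma$, using monotonicity of $t/(1+t)$) are routine; the inequality juggling is where all the work sits, and it is exactly the ``elementary but rather cumbersome computation'' the authors allude to.
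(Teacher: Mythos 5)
Your proposal sketches the right general shape --- lower-bound $\lambda_+$, use the monotonicity of $t\mapsto t/(1+t)$, and reduce to a polynomial inequality in $d$ and $k$ --- but it stops exactly where the proof actually lives, and the one concrete mechanism you offer for closing it would not work as stated. In Step 3 you propose to ``bound every occurrence of $d$ from below by $k^2$ (or, where the direction demands it, keep $d$ and use $k\le\sqrt d$ to bound $k$)''; this is not a valid move, because after clearing denominators $d$ appears with both favourable and unfavourable signs, and substituting $k^2$ for $d$ only where the sign is favourable proves nothing about the original two-variable inequality. The missing idea, which you mention only parenthetically and never commit to, is the observation that for \emph{fixed} $d$ the left-hand side is increasing in $k$ while $\lambda_+$ (hence the right-hand side $\gamma/(1+\gamma)$) is decreasing in $k$, since $k\mapsto k-3+\tfrac1k$ is increasing. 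This legitimately reduces the entire lemma to the one-parameter extreme case $d=k^2$, after which everything is a single-variable verification in $k\ge 2$ (with $k=2,3$ checked by direct substitution). That reduction is the structural step that makes the computation feasible, not a cosmetic choice.

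A second, smaller issue: your plan to ``clear the square roots by squaring'' would force you to square twice (once for $\sqrt{(d-1)(k-2)}$ and once for the discriminant inside $\lambda_+$), and you never address the sign checks or the resulting blow-up in complexity. A cleaner route is to use the elementary bound $\sqrt{x}\ge x/d$ for $0\le x\le d^2$ to replace the square root in $\lambda_+$ by a rational expression \emph{without} squaring; combined with $\sqrt{d-1}<k$ at $d=k^2$, this reduces the claim to the positivity of a short explicit expression in $k$ alone. As written, your Step 3 defers the entire verification to an unspecified ``monotonicity/calculus argument,'' so the proposal is an outline rather than a proof: Steps 1--2 are correct but routine, and the content of the lemma --- the actual inequality in $d$ and $k$ --- is never established.
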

\begin{proof}
Since $\lambda_+$ is monotonically decreasing in $k$ and $t \mapsto t/(1+t)$ is monotonically increasing in $t$, the RHS is monotonically decreasing in $k$. Clearly, the LHS is monotonically increasing in $k$. Thus, it suffices to prove the statement for the extreme case of $d = k^2$. For $k=2,3$, we simply verify by substitution. For larger $k$, we strengthen the inequality slightly, using $\sqrt{d-1} < \sqrt{d}=k$, and rearrange to obtain
\begin{equation}\label{eq:kdInequality1}
    k\sqrt{k-2} < \frac{d}{d-1}(d-2-k\sqrt{k-2})\lambda_+.
\end{equation}
Next, we bound $\lambda_+$ from below by
\begin{align*}
    \lambda_+ &= \frac{1}{2d}\Bigg(d - 2 + \sqrt{\big(d-2\big)^2 - 4\big(d-1\big)\big(k - 3 + \tfrac{1}{k}\big)}\Bigg)\\
    &\geq \frac{1}{2d}\Bigg(d - 2 + \frac{1}{d}\Big(\big(d-2\big)^2 - 4\big(d-1\big)\big(k - 3 + \tfrac{1}{k}\big)\Big)\Bigg),
\end{align*}
observing that the argument of the square root is always less than $d^2$. We now substitute this bound for $\lambda_+$ in 
(\ref{eq:kdInequality1}) and rearrange to obtain that our claim surely holds if the expression
\begin{equation*}
    \frac{4}{k^3}+ \frac{-8+2\sqrt{k-2}}{k^2} + \frac{2-4\sqrt{k-2}}{k} +2(1+\sqrt{k-2})- 2(1+\sqrt{k-2})k+k^2
\end{equation*}
is positive. Omitting positive terms $4/k^3$, $2\sqrt{k-2}/k^2$, $(-8/k^2 + 2/k)$ and using that $-4\sqrt{k-2}/k \geq -\sqrt{k-2}$ for $k \geq 4$, it is then 
an elementary exercise to justify that the above expression is indeed positive.
\end{proof}

\begin{lemma}\label{lem:inequality2}
For $2 \leq k \leq \sqrt{d}$,
\begin{equation*}
    \frac{\sqrt{k-2}(k-1)}{\sqrt{k}} < \frac{d-2-d\lambda_-}{\sqrt{d-1}}.
\end{equation*}
\end{lemma}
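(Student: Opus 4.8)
The plan is to prove Lemma \ref{lem:inequality2} by reducing it, exactly as in Lemma \ref{lem:inequality1}, to the worst case $d = k^2$ and then carrying out an elementary one-variable estimate. First I would check the monotonicity of both sides in $d$ with $k$ fixed: the left-hand side does not depend on $d$ at all, so it suffices to show that the right-hand side $\frac{d - 2 - d\lambda_-}{\sqrt{d-1}}$ is minimized (over the admissible range $d \geq k^2$) at $d = k^2$. Writing $d\lambda_- = \tfrac12\big(d - 2 - \sqrt{(d-2)^2 - 4(d-1)(k-3+\tfrac1k)}\big)$, we get $d - 2 - d\lambda_- = \tfrac12\big(d - 2 + \sqrt{(d-2)^2 - 4(d-1)(k-3+\tfrac1k)}\big)$, i.e.\ $d - 2 - d\lambda_- = d\lambda_+$. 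So the claim is equivalent to
\begin{equation*}
    \frac{\sqrt{k-2}\,(k-1)}{\sqrt{k}} < \frac{d\lambda_+}{\sqrt{d-1}},
\end{equation*}
and I would argue that $\frac{d\lambda_+}{\sqrt{d-1}}$ is increasing in $d$ for $d \geq k^2$ (the numerator $d\lambda_+$ grows roughly like $d-1$ while the denominator only like $\sqrt{d-1}$, and one can make this rigorous by differentiating or by a direct comparison), so again it suffices to treat $d = k^2$.

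Next, with $d = k^2$ substituted, the inequality becomes a statement purely in $k$: one must show
\begin{equation*}
    \frac{\sqrt{k-2}\,(k-1)}{\sqrt{k}} < \frac{k^2 \lambda_+}{\sqrt{k^2 - 1}}, \qquad \lambda_+ = \frac{1}{2k^2}\Big(k^2 - 2 + \sqrt{(k^2-2)^2 - 4(k^2-1)(k - 3 + \tfrac1k)}\Big).
\end{equation*}
As in the proof of Lemma \ref{lem:inequality1}, I would dispatch the small cases $k = 2, 3$ (and perhaps $k=4,5$) by direct substitution — note that for $k = 2$ the left side is literally $0$, so the bound is trivial there, matching Remark \ref{betterboundremark}. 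For $k \geq 4$ (or wherever the asymptotics kick in cleanly), I would bound $\lambda_+$ from below by replacing the square root $\sqrt{(k^2-2)^2 - 4(k^2-1)(k-3+\tfrac1k)}$ with a lower bound of the form $\tfrac{1}{k^2-2}\big((k^2-2)^2 - 4(k^2-1)(k-3+\tfrac1k)\big)$ (using $\sqrt{x^2 - y} \geq \tfrac{x^2-y}{x}$ for $0 \leq y \leq x^2$), turning $\lambda_+$ into an explicit rational function of $k$; simultaneously I would bound $\sqrt{k^2-1} < k$ and $\sqrt{k-2} < \sqrt{k}$ to clear the surds on both sides. After cross-multiplying, the claim reduces to the positivity of an explicit polynomial in $k$ (of modest degree), whose leading term dominates for $k$ large; I would then discard manifestly positive lower-order terms and check the remainder is positive for all $k \geq 4$ by an elementary argument, exactly the style used to finish Lemma \ref{lem:inequality1}.

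The main obstacle I anticipate is \emph{not} conceptual but bookkeeping: the lower bound on $\lambda_+$ via $\sqrt{x^2-y} \geq (x^2-y)/x$ may be too lossy near the threshold $d = k^2$, where the discriminant $(k^2-2)^2 - 4(k^2-1)(k-3+\tfrac1k)$ is comparatively small relative to $(k^2-2)^2$ (it is exactly where the gap $\lambda_+ - \lambda_-$ is smallest), so I may need a sharper intermediate estimate — for instance keeping $\sqrt{x^2-y} \geq x - \tfrac{y}{2x} - \tfrac{y^2}{2x^3}$ or splitting into a range $4 \leq k \leq k_0$ handled numerically and $k > k_0$ handled asymptotically. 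Either way the final step is a routine, if tedious, polynomial positivity check, and since the paper already flags (Remark \ref{betterboundremark}) that the bound is not tight and the computation ``cumbersome,'' I would present the reduction cleanly and leave the terminal polynomial inequality as an elementary verification.
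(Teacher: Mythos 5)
Your high-level strategy mirrors the paper's: reduce to the extreme case $d=k^2$ by monotonicity, dispose of $k=2$ trivially, replace the square root in $\lambda_\pm$ by a rational lower bound, and finish with a polynomial positivity check. Your reduction step is fine --- the identity $d-2-d\lambda_- = d\lambda_+$ is correct, and minimizing the right-hand side over $d\ge k^2$ (the left-hand side being independent of $d$) is a legitimate alternative to the paper's monotonicity-in-$k$ argument; both can be made rigorous. The genuine gap is in your surd-clearing step. Bounding $\sqrt{k-2}<\sqrt{k}$ gives only $\mathrm{LHS}<k-1$, so after $\sqrt{d-1}\le k$ your target becomes $k(k-1)<d\lambda_+$. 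At $d=k^2$ one computes $d\lambda_+-k(k-1)=\tfrac12\bigl(\sqrt{z}-(k^2-2k+2)\bigr)$ with $z=(k^2-2)^2-4(k^2-1)(k-3+\tfrac1k)$, and since $z-(k^2-2k+2)^2=8k-12+\tfrac4k$, this margin is only of order $1/k$. Your reduced inequality is therefore true, but by a vanishing margin: no generic lower bound on $\sqrt{z}$ that loses even a bounded amount can close it. In particular $(x^2-y)/x$ loses roughly $2k$ in $\sqrt{z}$, and $x-\tfrac{y}{2x}-\tfrac{y^2}{2x^3}$ still loses a constant close to $6$, so both of your proposed estimates turn the ``terminal polynomial inequality'' into a \emph{false} statement for all $k$ beyond the first few values. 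Your anticipated obstacle is thus real but misplaced: the fatal loss is incurred on the left-hand side before you ever touch $\lambda_+$, and it cannot be repaired by sharpening the square-root bound.

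The fix is the estimate the paper actually uses: $\sqrt{(k-2)/k}\le 1-\tfrac1k$, which bounds the left-hand side by $(k-1)^2/k$ rather than $k-1$. The resulting target $(k-1)^2<d\lambda_+$ enjoys a margin of order $k$ at $d=k^2$, which comfortably absorbs the $O(k)$ loss of the crude bound $\sqrt{z}\ge z/d$ and reduces to a manifestly positive cubic in $k$. (Alternatively, you could salvage your version by proving $\sqrt{z}>k^2-2k+2$ directly by squaring --- a targeted comparison rather than the generic rational bound you describe --- but as written your chain of estimates does not close.)
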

\begin{proof}
Since $\lambda_-$ is monotonically increasing in $k$, the RHS is monotonically decreasing in $k$, and the LHS is clearly monotonically increasing in $k$. Hence, it suffices to prove the statement for the extreme case of $d = k^2$. We first observe that the inequality is trivial if $k=2$; otherwise we strengthen it by noting $\sqrt{(k-2)/k} \leq 1 - 1/k$ and $\sqrt{d-1} \leq \sqrt{d}=k$, and rearrange to obtain
\begin{equation}\label{eq:kdInequality2}
    (k-1)^2 \geq d-2-d\lambda_-.
\end{equation}
Next, we bound $\lambda_-$ from above by
\begin{align*}
    \lambda_- &= \frac{1}{2d}\Bigg(d - 2 - \sqrt{\big(d-2\big)^2 - 4\big(d-1\big)\big(k - 3 + \tfrac{1}{k}\big)}\Bigg)\\
    &\leq \frac{1}{2d}\Bigg(d - 2 - \frac{1}{d}\Big(\big(d-2\big)^2 - 4\big(d-1\big)\big(k - 3 + \tfrac{1}{k}\big)\Big)\Bigg),
\end{align*}
observing that the argument of the square root is always less than $d^2$. Substituting this bound for $\lambda_-$ in (\ref{eq:kdInequality2}) and simplifying, we strengthen the inequality a final time to
\begin{equation*}
\frac{2}{k^3}(k^3+2k-1) > 0,
\end{equation*}
which is easily verified for $k > 1$.
\end{proof}

Theorem \ref{thm:niceRigidity} requires an elementary result from number theory.
\begin{proposition}\label{prop:divisors}
The gap between $d$ and the next smallest divisor of $d^2$ is at least $\lfloor\sqrt{d}\rfloor$.
\end{proposition}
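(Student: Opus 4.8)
The plan is to reduce the claim to an elementary optimization. Recall that we must show: every divisor $m$ of $d^2$ with $m<d$ satisfies $d-m\ge\lfloor\sqrt d\rfloor$. The first step is a structural observation about such $m$: writing $g=\gcd(m,d)$, $m=gm_1$, $d=gd_1$ with $\gcd(m_1,d_1)=1$, the hypothesis $m\mid d^2$ forces $m_1\mid g$; setting $a:=m_1$, $b:=d_1$, $c:=g/m_1$ (all positive integers, with $\gcd(a,b)=1$, and $a<b$ because $m<d$), we obtain
\begin{equation*}
    m=a^2c,\qquad d=abc .
\end{equation*}

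Next I would extract two lower bounds on the gap in terms of the single quantity $u:=ac=d/b$. From $d-m=ac(b-a)$ and $b-a\ge 1$ we get $d-m\ge u$; from $d-m=d-a^2c=d-au$ together with $a\le ac=u$ we get $d-m\ge d-u^2$. Hence $d-m\ge\max\{u,\,d-u^2\}$. Minimizing the right-hand side over real $u>0$ — the maximum of the two branches is smallest where $u=d-u^2$ — we find $d-m\ge u^\ast:=\tfrac{1}{2}\bigl(\sqrt{4d+1}-1\bigr)$, valid for every admissible $m$.

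To conclude, note that $d-m$ is a positive integer and that $u^\ast>\sqrt d-1$, since this rearranges to $\sqrt{4d+1}>2\sqrt d-1$ and then, upon squaring, to $4\sqrt d>0$. Therefore $u^\ast>\sqrt d-1\ge\lfloor\sqrt d\rfloor-1$, and an integer that exceeds $\lfloor\sqrt d\rfloor-1$ must be at least $\lfloor\sqrt d\rfloor$. I do not anticipate a genuine obstacle; the only points requiring care are ensuring the reduction yields the strict inequality $a<b$ (so that $b-a\ge 1$ may legitimately be used) and the floor/ceiling bookkeeping in the final step. It is perhaps worth remarking that the bound is sharp: for $d=n(n-1)$, the divisor $m=(n-1)^2$ of $d^2$ gives $d-m=n-1=\lfloor\sqrt d\rfloor$.
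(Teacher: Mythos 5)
Your proof is correct, but it takes a genuinely different route from the paper's. The paper argues directly: if $d-j$ divides $d^2$, then $d^2/(d-j)=d+j+j^2/(d-j)$ forces $j^2/(d-j)$ to be a (positive) integer, yet this quantity is strictly less than $1$ for $1\le j\le \sqrt{d}-1$, so $j>\sqrt{d}-1$ and hence $j\ge\lfloor\sqrt{d}\rfloor$ --- three lines, no structure theory needed. You instead classify the divisors of $d^2$ below $d$ via the decomposition $m=a^2c$, $d=abc$ with $\gcd(a,b)=1$ and $a<b$ (which is correct: $m\mid d^2$ gives $m_1\mid g d_1^2$ and coprimality gives $m_1\mid g$), then derive the two bounds $d-m\ge u$ and $d-m\ge d-u^2$ for $u=d/b$ and optimize. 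All steps check out, including the final integrality argument ($d-m>\sqrt d-1\ge\lfloor\sqrt d\rfloor-1$ forces $d-m\ge\lfloor\sqrt d\rfloor$, exactly as in the paper's last step). What your approach buys is structural insight: the parametrization $m=a^2c$, $d=abc$ exhibits \emph{which} divisors come close to $d$, and it hands you the sharpness example $d=n(n-1)$, $m=(n-1)^2$ for free --- something the paper does not record. The cost is length; the paper's one-line algebraic identity achieves the same bound with no case analysis or gcd bookkeeping.
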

\begin{proof}
Suppose that $d - j$ divides $d^2$, for some positive integer $j$. Then,
\begin{equation*}
    \frac{d^2}{d-j} = d + j + \frac{j^2}{d-j}
\end{equation*}
is an integer, forcing its last term $j^2/(d-j) =: f(j)$ to be one as well. Using the monotonicity of $f$, it is easy to check that $0 < f(j) < 1$ for $1 \leq j \leq \sqrt{d}-1$. Thus, the integrality of $f(j)$ requires $j > \sqrt{d}-1$, that is $j \geq \lfloor \sqrt{d} \rfloor$.
\end{proof}

\printbibliography

\end{document}